\newcommand{\set}[1]{\left\{#1\right\}}
\newcommand{\p}{\partial}
\newcommand{\eps}{\varepsilon}
\newcommand{\vB}{\mathbf{B}}
\newcommand{\vE}{\mathbf{E}}
\newcommand{\vF}{\mathbf{F}}
\newcommand{\vH}{\mathbf{H}}
\newcommand{\vU}{\mathbf{U}}
\newcommand{\vV}{\mathbf{V}}
\newcommand{\vW}{\mathbf{W}}
\newcommand{\n}{\mathbf{n}}
\newcommand{\x}{\mathbf{x}}
\newcommand{\y}{\mathbf{y}}
\newcommand{\z}{\mathbf{z}}
\newcommand{\va}{\boldsymbol{\phi}}
\newcommand{\vv}{\boldsymbol{\vartheta}}
\newcommand{\vt}{\boldsymbol{\theta}}
\newcommand{\vtau}{\boldsymbol{\tau}}
\newcommand{\veta}{\boldsymbol{\eta}}
\theoremstyle{plain}
\newtheorem{thm}{Theorem}[section]
\newtheorem{rem}[thm]{Remark}
\begin{document}
\begin{frontmatter}
\title{Analysis of a multi-frequency electromagnetic imaging functional for thin, crack-like electromagnetic inclusions}

\author{Won-Kwang Park}
\ead{parkwk@kookmin.ac.kr}
\address{Department of Mathematics, Kookmin University, Seoul, 136-702, Korea.}

\begin{abstract}
Recently, a non-iterative multi-frequency subspace migration imaging algorithm was developed based on an asymptotic expansion formula for thin, curve-like electromagnetic inclusions and the structure of singular vectors in the Multi-Static Response (MSR) matrix. The present study examines the structure of subspace migration imaging functional and proposes an improved imaging functional weighted by the frequency. We identify the relationship between the imaging functional and Bessel functions of integer order of the first kind. Numerical examples for single and multiple inclusions show that the presented algorithm not only retains the advantages of the traditional imaging functional but also improves the imaging performance.
\end{abstract}

\begin{keyword}
Multi-frequency subspace migration imaging algorithm \sep thin electromagnetic inclusion \sep asymptotic expansion formula \sep Multi-Static Response (MSR) matrix \sep Bessel functions \sep numerical examples
\end{keyword}
\end{frontmatter}

\section{Introduction}
Central to this paper is the problem of inverse scattering from thin curve-like electromagnetic inclusion(s) embedded in a homogeneous domain. Generally, the main purpose of inverse scattering problem is to identify characteristics of the target of interest, e.g., shape, location, and material properties from measured scattered field data. Among them, shape reconstruction of extended electromagnetic inhomogeneities with small thickness or perfectly conducting cracks is viewed as a difficult problem because of its ill-posedness and nonlinearily but this attracted the attention of researchers because this problem plays a significant role in many fields such as physics, medical science, and non-destructive testing of materials. Consequently, various shape reconstruction algorithms have been reported. However, most of these algorithms are based on Newton-type iteration scheme, which requires addition of a regularization term, complex evaluation of Fr{\'e}chet derivatives at each iteration step, and \textit{a priori} information of unknown inclusion(s). However, shape reconstruction via the iteration method with a bad initial guess fails if the above conditions are not fulfilled.

On account of this, various non-iterative shape reconstruction algorithms have been developed, such as MUltiple SIgnal Classification (MUSIC) algorithm \cite{AKLP,JP,PL1,PL3}, topological derivative strategy \cite{MKP,MP,P4,P3}, and linear sampling method \cite{CHM,KR}. Recently, multi-frequency based subspace migration imaging algorithm was developed for obtaining a more accurate shape of unknown inclusions. Related articles can be found in \cite{AGKPS,G,P1,P2,PL2,PP} and references therein. However, studies have applied this algorithm heuristically and therefore certain phenomena such as appearance of unexpected ghost replicas cannot be explained. This gave the main impetus for this study to explore the structure of multi-frequency imaging algorithm.

In this paper, we carefully analyze the structure of multi-frequency subspace migration imaging functional by establishing a relationship with the Bessel functions of integer order of the first kind. This is based on the fact that measured boundary data can be represented as an asymptotic expansion formula in the existence of extended, thin inclusion(s). We proceed to explore certain properties of imaging functional and find conditions for good imaging performance. We finally propose an improved imaging functional weighted by power of applied frequencies for producing better results. We carry out a structure analysis of imaging functional, explore a condition of imaging performance, discuss certain properties, and present numerical examples to show its feasibility.

The rest of this paper is organized as follows. In Section \ref{sec:2} we introduce the two-dimensional direct scattering problem and an asymptotic expansion formula for a thin electromagnetic inclusion, and then review the multi-frequency subspace migration imaging functional presented in \cite{AGJK,AGKPS,P1,P2,PL2}. In Section \ref{sec:3}, we discover and specify the structure and properties of the existing multi-frequency subspace migration imaging functional. We then design an improved imaging functional weighted by several applied frequencies and analyze its structure in order to investigate its properties. In Section \ref{sec:4}, various numerical examples are exhibited and discussed in order to verify our theoretical results. Finally, in Section \ref{sec:5}, conclusion of this paper is presented.

Finally, we would like to mention that although the constructed shape via the proposed algorithm does not match the target shape completely, considering it as an initial guess for an iterative algorithm will be helpful for a successful reconstruction; for details, refer to \cite{ADIM,DL,PL4}.

\section{Preliminaries}\label{sec:2}
\subsection{Direct scattering problem and asymptotic expansion formula}
We briefly survey two-dimensional electromagnetic scattering from a thin, curve-like inclusion in a homogeneous domain. For this purpose, let $\Omega$ and $\Upsilon$ denote a homogeneous domain with a smooth boundary and a thin inclusion, which is characterized in the neighborhood of a simple, smooth curve $\gamma$:
\[\Upsilon=\set{\x+\rho\veta(\x):\x\in\gamma,~-h\leq\rho\leq h},\]
where $h$ specifies the thickness of $\Upsilon$. Throughout this paper, we denote $\vtau(\x)$ and $\veta(\x)$ as the unit tangential and normal, respectively to $\gamma$ at $\x$ (See Figure \ref{FigureGamma}).

\begin{figure}
\begin{center}
\includegraphics[width=0.45\textwidth,keepaspectratio=true,angle=0]{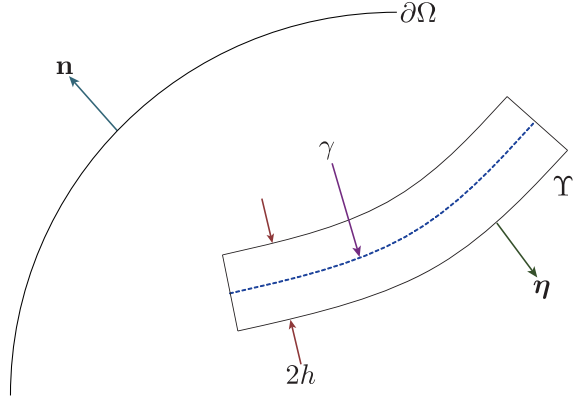}
\caption{\label{FigureGamma}Two-dimensional thin, curve-like electromagnetic inclusion $\Upsilon$ of thickness $2h$ with supporting curve $\gamma$.}
\end{center}
\end{figure}

Let $\eps_0$ and $\mu_0$ denote the dielectric permittivity and magnetic permeability of $\Omega$, respectively. Similarly, $\eps$ and $\mu$ denote the permittivity and permeability of $\Upsilon$, respectively. At a given non-zero frequency $\omega$, let $u^{(l)}(\x;\omega)$ be the time-harmonic total electromagnetic field that satisfies the following boundary value problem:
\begin{equation}\label{Forward}
  \left\{
  \begin{array}{rcl}
    \medskip\displaystyle\nabla\cdot\left(\frac{1}{\mu}\chi(\Upsilon)+\frac{1}{\mu_0}\chi(\Omega\backslash\overline{\Upsilon})\right)\nabla u^{(l)}(\x;\omega)+\omega^2\left(\eps\chi(\Upsilon)+\eps_0\chi(\Omega\backslash\overline{\Upsilon})\right)u^{(l)}(\x;\omega)=0
  & \mbox{for} & \x\in\Omega \\
  \medskip\displaystyle\frac{1}{\mu_0}\frac{\p u^{(l)}(\x;\omega)}{\p\n(\x)}=\frac{1}{\mu_0}\frac{\p e^{i\omega\vt_l\cdot\x}}{\p\n(\x)} & \mbox{for} & \x\in\p\Omega
  \end{array}
\right.
\end{equation}
with transmission conditions on $\p\Omega$ and $\p\Upsilon$. Here, $\left\{\vt_l:l=1,2,\cdots,Q\right\}$ is the set of incident directions equally distributed on the unit circle $\mathbb{S}^1$, and $\chi(A)$ denotes the characteristic function of a set $A$. Let $u_0^{(l)}(\x;\omega)=e^{i\omega\vt_l\cdot\x}$ be the solution of (\ref{Forward}) without $\Upsilon$. Then, due to the existence of $\Upsilon$, the following asymptotic expansion formula holds. This formula will contribute to development of the imaging algorithm. A rigorous derivation of this formula can be found in \cite{BF}.

\begin{thm}[Asymptotic expansion formula]
  For $\x\in\gamma$ and $\y\in\p\Omega$, the following asymptotic expansion formula holds:
  \[u^{(l)}(\y;\omega)-u_0^{(l)}(\y;\omega)=hu_\gamma^{(l)}(\y;\omega)+\mathcal{O}(h^2),\]
  where the perturbation term $u_\gamma^{(l)}(\x;\omega)$ is given by
  \begin{equation}\label{perturbation}
    u_\gamma^{(l)}(\y;\omega)=\omega^2\int_{\gamma}\left\{\left(\frac{\eps-\eps_0}{\sqrt{\eps_0\mu_0}}\right)u_0^{(l)}(\x;\omega)\Lambda(\x,\y;\omega) +\nabla u_0^{(l)}(\x;\omega)\cdot\mathbb{M}(\x)\cdot\nabla\Lambda(\x,\y;\omega)\right\}d\gamma(\x).
  \end{equation}
  Here, $\Lambda(\x,\y;\omega)$ is the Neumann function for $\Omega$ that satisfies
  \[\left\{
  \begin{array}{rcl}
  \displaystyle\frac{1}{\mu_0}\nabla\cdot\nabla \Lambda(\x,\y;\omega)+\omega^2\eps_0\Lambda(\x,\y;\omega)=-\delta(\x,\y) & \mbox{for} & \x\in\Omega \\
  \displaystyle\frac{1}{\mu_0}\frac{\p \Lambda(\x,\y;\omega)}{\p\n(\x)}=0 & \mbox{for} & \x\in\p\Omega,
  \end{array}
  \right.\]
  and $\mathbb{M}(\x)$ is a $2\times2$ symmetric matrix such that
  \begin{itemize}
    \item $\mathbb{M}(\x)$ has eigenvectors $\vtau(\x)$ and $\veta(\x)$,
    \item eigenvalues corresponding to $\vtau(\x)$ and $\veta(\x)$ are $2\left(\frac{1}{\mu}-\frac{1}{\mu_0}\right)$ and $2\left(\frac{1}{\mu_0}-\frac{\mu}{\mu_0^2}\right)$, respectively.
  \end{itemize}
\end{thm}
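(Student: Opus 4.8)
The plan is to derive the formula from an \emph{exact} integral representation of the scattered field and then extract its leading-order behaviour in the thickness $h$. First I would obtain a representation formula: subtracting the governing equation for $u_0^{(l)}$ from that for $u^{(l)}$, multiplying by the Neumann function $\Lambda(\cdot,\y;\omega)$, integrating over $\Omega$, and applying Green's second identity together with the homogeneous Neumann conditions satisfied by both $u^{(l)}$ and $\Lambda$ on $\p\Omega$. Since the coefficient jumps are supported on $\Upsilon$, all boundary contributions cancel and one is left with the identity
\begin{equation*}
u^{(l)}(\y;\omega)-u_0^{(l)}(\y;\omega)=\int_{\Upsilon}\left\{\omega^2(\eps-\eps_0)u^{(l)}(\x;\omega)\Lambda(\x,\y;\omega)+\left(\frac{1}{\mu}-\frac{1}{\mu_0}\right)\nabla u^{(l)}(\x;\omega)\cdot\nabla\Lambda(\x,\y;\omega)\right\}d\x.
\end{equation*}
This already has the right structure; the task reduces to passing from a volume integral over the thin set $\Upsilon$ to a curve integral over $\gamma$ and to replacing the unknown total field $u^{(l)}$ by the explicit incident field $u_0^{(l)}$.

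Next I would introduce the tubular coordinates $\x+\rho\veta(\x)$, $\x\in\gamma$, $-h\le\rho\le h$, in which $d\x=(1+\mathcal{O}(\rho))\,d\rho\,d\gamma(\x)$, the $\mathcal{O}(\rho)$ term encoding the curvature of $\gamma$. Integrating in $\rho$ over $[-h,h]$ produces the overall factor proportional to $h$ and relegates the Jacobian correction, together with the $\rho$-variation of the integrand, to $\mathcal{O}(h^2)$. In the zeroth-order (permittivity) term I would then replace $u^{(l)}$ by $u_0^{(l)}$ on $\gamma$; since the whole term is already $\mathcal{O}(h)$ and $u^{(l)}=u_0^{(l)}+\mathcal{O}(h)$ by the smallness of the perturbation, this substitution costs only $\mathcal{O}(h^2)$ and yields the first term of \eqref{perturbation} up to the normalising constant.

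The delicate step --- and the one I expect to be the main obstacle --- is the gradient term, because $\nabla u^{(l)}$ is \emph{discontinuous} across $\p\Upsilon$ and may not be replaced naively by $\nabla u_0^{(l)}$. Here I would decompose $\nabla u^{(l)}$ into its components along the frame $\set{\vtau(\x),\veta(\x)}$. The transmission conditions force the tangential derivative to be continuous, so $\nabla u^{(l)}\cdot\vtau\approx\nabla u_0^{(l)}\cdot\vtau$ inside $\Upsilon$, whereas continuity of the normal flux $\frac1\mu\p_{\n}u^{(l)}$ makes the normal derivative jump, giving $\nabla u^{(l)}\cdot\veta\approx\frac{\mu}{\mu_0}\nabla u_0^{(l)}\cdot\veta$ to leading order. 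Substituting these relations turns the scalar contrast $\frac1\mu-\frac1{\mu_0}$ into the anisotropic action $\nabla u_0^{(l)}\cdot\mathbb{M}(\x)\cdot\nabla\Lambda$: the tangential direction retains the factor $\frac1\mu-\frac1{\mu_0}$, while the normal direction acquires $\left(\frac1\mu-\frac1{\mu_0}\right)\frac\mu{\mu_0}=\frac1{\mu_0}-\frac\mu{\mu_0^2}$, precisely the two eigenvalues of $\mathbb{M}$ associated with $\vtau(\x)$ and $\veta(\x)$ (the factor $2$ arising from the length $2h$ of the $\rho$-interval).

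The technical heart, which I would either establish through a boundary-layer energy estimate in the thin strip or simply cite from \cite{BF}, is the rigorous justification that the interior normal derivative of $u^{(l)}$ converges to $\frac{\mu}{\mu_0}$ times that of $u_0^{(l)}$ and that all neglected terms are uniformly $\mathcal{O}(h^2)$; this requires $h$-independent elliptic regularity bounds for $u^{(l)}$ near $\gamma$. Collecting the two contributions and matching the normalising constants then gives exactly \eqref{perturbation}.
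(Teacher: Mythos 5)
The paper does not prove this theorem at all: it states the expansion and defers the entire derivation to Beretta and Francini \cite{BF} (``A rigorous derivation of this formula can be found in \cite{BF}''). So there is no in-paper argument to compare yours against; what can be said is that your outline follows the standard route of that reference --- an exact representation formula via Green's second identity against the Neumann function, reduction of the volume integral over $\Upsilon$ to a curve integral over $\gamma$ in tubular coordinates, and the transmission conditions across $\p\Upsilon$ (continuity of the tangential derivative, continuity of $\frac{1}{\mu}\p_\n u$) to replace $\nabla u^{(l)}$ by the anisotropic action of $\mathbb{M}(\x)$ on $\nabla u_0^{(l)}$. Your identification of the two eigenvalues is consistent with the statement, since $\bigl(\frac{1}{\mu}-\frac{1}{\mu_0}\bigr)\frac{\mu}{\mu_0}=\frac{1}{\mu_0}-\frac{\mu}{\mu_0^2}$, and the factor $2$ from the thickness $2h$ is correctly accounted for. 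Two caveats: first, in the Green's identity step it is the difference $u^{(l)}-u_0^{(l)}$, not $u^{(l)}$ itself, that satisfies the homogeneous Neumann condition on $\p\Omega$ (the forward problem (\ref{Forward}) prescribes inhomogeneous Neumann data), so you should apply the identity to the difference; the boundary terms still cancel, but as written your justification is slightly off. Second, your derivation would produce the coefficient $2(\eps-\eps_0)$ in the zeroth-order term rather than the $\frac{\eps-\eps_0}{\sqrt{\eps_0\mu_0}}$ appearing in (\ref{perturbation}); you flag this as ``matching the normalising constants,'' but that mismatch is not resolvable from your sketch alone and would have to be traced through the precise conventions of \cite{BF}. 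The genuinely hard analytic content --- uniform-in-$h$ regularity near $\gamma$ and the rigorous $\mathcal{O}(h^2)$ control of the boundary layer --- is acknowledged but not supplied, which puts your write-up at the same level of completeness as the paper itself.
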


\subsection{Introduction to multi-frequency subspace migration imaging functional}
Now, we apply the asymptotic formula (\ref{perturbation}) in order to introduce the multi-frequency imaging functional. To do so, we will use the eigenvalue structure of the MSR matrix $\mathbb{A}(\omega)=[A_{jl}(\omega)]\in\mathbb{C}^{P\times Q}$, where the element $A_{jl}(\omega)$ is the following normalized boundary measurement:
\begin{multline}\label{ElementA}
  A_{jl}(\omega)=\int_{\p\Omega}u_\gamma^{(l)}(\y;\omega)\frac{\p v^{(j)}(\y;\omega)}{\p\n(\y)}dS(\y)
  =\omega^2\int_{\gamma}\left\{\left(\frac{\eps-\eps_0}{\sqrt{\eps_0\mu_0}}\right)+\vv_j\cdot\mathbb{M}(\x)\cdot\vt_l\right\}e^{-i\omega(\vv_j-\vt_l)\cdot\x}d\gamma(\x)\\
  \approx\frac{\omega^2\ell(\gamma)}{M}\sum_{m=1}^{M}\left\{\left(\frac{\eps-\eps_0}{\sqrt{\eps_0\mu_0}}\right)+2\left(\frac{1}{\mu}-\frac{1}{\mu_0}\right)\vv_j\cdot\vtau(\x_m)\vt_l\cdot\vtau(\x_m)\right.\\ \left.+2\left(\frac{1}{\mu_0}-\frac{\mu}{\mu_0^2}\right)\vv_j\cdot\veta(\x_m)\vt_l\cdot\veta(\x_m)\right\}e^{-i\omega(\vv_j-\vt_l)\cdot\x_m},
\end{multline}
where $\ell(\gamma)$ denotes the length of $\gamma$ and $v^{(j)}(\x;\omega)=e^{-i\omega\vv_j\cdot\x}$, $j=1,2,\cdots,P$, is a test function. Note that $\{\vv_j:j=1,2,\cdots,P\}$ is the set of unit vectors equally distributed on the two-dimensional unit circle $\mathbb{S}^1$. Based on the recent work \cite{PL3}, the number of $P$ and $Q$ must be satisfying $P,Q\gg3M$.

Throughout this section, we assume that for a given frequency $\omega=\frac{2\pi}{\lambda}$, $\gamma$ is divided into $M$ different segments of size of order $\frac{\lambda}{2}$, and only one point, say $\x_m$ for $m=1,2,\cdots,M$, at each segment, contributes to the image space of the Multi-Static Response (MSR) matrix $\mathbb{A}(\omega)$ based on the Rayleigh resolution limit; for details, refer to \cite{A,PL1,PL2,PL3}.

Based on the representation (\ref{ElementA}), $\mathbb{A}(\omega)$ can be decomposed as follows:
\begin{equation}\label{SymmetryMSR}
  \mathbb{A}(\omega)=\mathbb{B}(\omega)\mathbb{D}(\omega)\mathbb{H}(\omega),
\end{equation}
where $\mathbb{D}(\omega)\in\mathbb{R}^{3M\times3M}$ is a block diagonal matrix with components
\[\frac{\omega^2\ell(\gamma)}{M}
\left[
  \begin{array}{cc}
    \eps-\eps_0 & \mathbb{O}_{1\times2} \\
    \mathbb{O}_{2\times1} & \mathbb{M}(\x_m)\\
  \end{array}
\right],\]
and $\mathbb{B}(\omega)\in\mathbb{C}^{P\times3M}$ and $\mathbb{H}(\omega)\in\mathbb{C}^{3M\times Q}$ are written as
\[\mathbb{B}(\omega)=\bigg[\vB_1(\omega),\vB_2(\omega),\cdots,\vB_M(\omega)\bigg]\quad\mbox{and}\quad
\mathbb{H}(\omega)=\bigg[\vH_1(\omega),\vH_2(\omega),\cdots,\vH_M(\omega)\bigg],\]
respectively. Here, $\mathbb{O}_{p\times q}$ denotes the $p\times q$ zero matrix and vectors $\vB_m(\omega)$ and $\vH_m(\omega)$ are represented
\begin{equation}\label{Vec1}
  \vB_m(\omega)=\left[
        \begin{array}{ccc}
          e^{-i\omega\vv_1\cdot\x_m}, & \vv_1\cdot\vtau(\x_m)e^{-i\omega\vv_1\cdot\x_m}, & \vv_1\cdot\veta(\x_m)e^{-i\omega\vv_1\cdot\x_m} \\
          e^{-i\omega\vv_2\cdot\x_m}, & \vv_2\cdot\vtau(\x_m)e^{-i\omega\vv_2\cdot\x_m}, & \vv_2\cdot\veta(\x_m)e^{-i\omega\vv_2\cdot\x_m} \\
          \vdots & \vdots & \vdots \\
          e^{-i\omega\vv_P\cdot\x_m}, & \vv_P\cdot\vtau(\x_m)e^{-i\omega\vv_P\cdot\x_m}, & \vv_P\cdot\veta(\x_m)e^{-i\omega\vv_P\cdot\x_m} \\
        \end{array}
      \right]
\end{equation}
and
\begin{equation}\label{Vec2}
  \vH_m(\omega)=\left[
        \begin{array}{ccc}
          e^{i\omega\vt_1\cdot\x_m}, & \vt_1\cdot\vtau(\x_m)e^{i\omega\vt_1\cdot\x_m}, & \vt_1\cdot\veta(\x_m)e^{i\omega\vt_1\cdot\x_m} \\
          e^{i\omega\vt_2\cdot\x_m}, & \vt_2\cdot\vtau(\x_m)e^{i\omega\vt_2\cdot\x_m}, & \vt_2\cdot\veta(\x_m)e^{i\omega\vt_2\cdot\x_m} \\
          \vdots & \vdots & \vdots \\
          e^{i\omega\vt_Q\cdot\x_m}, & \vt_Q\cdot\vtau(\x_m)e^{i\omega\vt_Q\cdot\x_m}, & \vt_Q\cdot\veta(\x_m)e^{i\omega\vt_Q\cdot\x_m} \\
        \end{array}
      \right]^\mathrm{T},
\end{equation}
respectively.

The multi-frequency subspace migration imaging functional developed in \cite{AGKPS,P1,P2} is given hereafter; for several frequencies $\left\{\omega_k:k=1,2,\cdots,K\right\}$, Singular Value Decomposition (SVD) of $\mathbb{A}(\omega_k)$ is performed as follows
\begin{equation}\label{SVDofMSR}
  \mathbb{A}(\omega_k)=\mathbb{U}(\omega_k)\mathbb{S}(\omega_k)\mathbb{V}^*(\omega_k)\approx\sum_{m=1}^{M_k}\sigma_m(\omega_k)\vU_m(\omega_k)\vV_m^*(\omega_k),
\end{equation}
where $\sigma_m(\omega_k)$ are nonzero singular values, and $\vU_m(\omega_k)$ and $\vV_m(\omega_k)$ are respectively the left and right singular vectors of $\mathbb{A}(\omega_k)$. Then, based on the representations (\ref{Vec1}) and (\ref{Vec2}), for a \textit{suitable vector} $\va\in\mathbb{R}^3\backslash\{\mathbf{0}\}$ and $\z\in\Omega$, define vectors $\vE(\z;\omega_k)\in\mathbb{C}^{P\times1}$ and $\vF(\z;\omega)\in\mathbb{C}^{Q\times1}$ as
\begin{align}
\begin{aligned}\label{VecEF}
  \vE(\z;\omega_k)&=\bigg[\va\cdot[1,\vv_1]^\mathrm{T}e^{-i\omega\vv_1\cdot\z},\va\cdot[1,\vv_2]^\mathrm{T}e^{-i\omega\vv_2\cdot\z},\cdots, \va\cdot[1,\vv_P]^\mathrm{T}e^{-i\omega\vv_P\cdot\z}\bigg]^\mathrm{T}\\
  \vF(\z;\omega_k)&=\bigg[\va\cdot[1,\vt_1]^\mathrm{T}e^{i\omega\vt_1\cdot\z},\va\cdot[1,\vt_2]^\mathrm{T}e^{i\omega\vt_2\cdot\z},\cdots, \va\cdot[1,\vt_Q]^\mathrm{T}e^{i\omega\vt_Q\cdot\z}\bigg]^\mathrm{T},
\end{aligned}
\end{align}
respectively. Consequently, the corresponding unit vectors can be generated:
\[\vW_\mathrm{E}(\z;\omega_k):=\frac{\vE(\z;\omega_k)}{|\vE(\z;\omega_k)|}\quad\mbox{and}\quad \vW_\mathrm{F}(\z;\omega_k):=\frac{\vF(\z;\omega_k)}{|\vF(\z;\omega_k)|}.\]
Then, by comparing decompositions (\ref{SymmetryMSR}) and (\ref{SVDofMSR}), we can observe that following property holds (see \cite{AGKPS} also)
\begin{equation}\label{RelationSV}
  \vW_\mathrm{E}(\x_m;\omega_k)\simeq\vU_m(\omega_k)\quad\mbox{and}\quad \vW_\mathrm{F}(\x_m;\omega_k)\simeq\overline{\vV}_m(\omega_k),
\end{equation}
based on the structure of (\ref{SymmetryMSR}) and (\ref{SVDofMSR}), and the orthonormal property of singular vectors, we can examine the following:
\begin{align}
\begin{aligned}\label{OrthonormalVectors}
  \vW_\mathrm{E}^*(\z;\omega_k)\vU_m(\omega_k)\approx1&\quad\mbox{and}\quad\vW_\mathrm{F}^*(\z;\omega_k)\overline{\vV}_m(\omega_k)\approx1\quad\mbox{if}\quad\z=\x_m\\
  \vW_\mathrm{E}^*(\z;\omega_k)\vU_m(\omega_k)\approx0&\quad\mbox{and}\quad\vW_\mathrm{F}^*(\z;\omega_k)\overline{\vV}_m(\omega_k)\approx0\quad\mbox{if}\quad\z\ne\x_m.
\end{aligned}
\end{align}
Subsequently, we can design the following multi-frequency imaging functional (see \cite{AGKPS,P1,P2,PL2})
\begin{equation}\label{MultiFrequencyImagingFunction}
  \mathbb{W}(\z,K):=\left|\sum_{k=1}^{K}\sum_{m=1}^{M_k}\bigg(\vW_\mathrm{E}^*(\z;\omega_k)\vU_m(\omega_k)\bigg)\bigg(\vW_\mathrm{F}^*(\z;\omega_k)\overline{\vV}_m(\omega_k)\bigg)\right|,
\end{equation}
Then, based on (\ref{OrthonormalVectors}), imaging functional (\ref{MultiFrequencyImagingFunction}) should plot magnitude $1$ and $0$ at $\z=\x_m\in\Upsilon$ and $\z\ne\x_m$, respectively. Although this fact indicates why subspace migration imaging functional produces an image of thin inclusion(s), a mathematical analysis is still needed for explaining some phenomena such as the appearance of unexpected ghost replicas. In the following section, we analyze the imaging functional (\ref{MultiFrequencyImagingFunction}) and improve it for better imaging performance.

\section{Introduction to weighted multi-frequency imaging functional}\label{sec:3}
\subsection{Analysis of multi-frequency imaging functional (\ref{MultiFrequencyImagingFunction})}
Now, we analyze the multi-frequency imaging functional (\ref{MultiFrequencyImagingFunction}). To do this, we assume that $P$ is sufficiently large and number of nonzero singular values $M_k$ is almost equal to $M$, for $k=1,2,\cdots,K$. We then obtain the following results.

\begin{thm}\label{TheoremMultiFrequencyImaging}
  Assuming that $K$ is sufficiently large, then (\ref{MultiFrequencyImagingFunction}) becomes
  \begin{enumerate}
    \item If $Q\gg3M$ and $\omega_K<+\infty$, then
    \begin{equation}\label{StructureMFIF}
      \mathbb{W}(\z,K)\sim\frac{K}{\omega_K-\omega_1}\sum_{m=1}^{M}\left(\Phi(|\z-\x_m|;\omega_K)-\Phi(|\z-\x_m|;\omega_1)+\int_{\omega_1}^{\omega_K}J_1(\omega|\z-\x_m|)^2d\omega\right).
    \end{equation}
    Here, $\Phi(t;\omega)$ is given by
    \[\Phi(t;\omega):=\omega\bigg(J_0(\omega t)^2+J_1(\omega t)^2\bigg),\]
    where $J_\nu(t)$ denotes the Bessel function of order $\nu$ and of the first kind.
    \item If $Q\gg3M$ and $\omega_K\longrightarrow+\infty$, then
    \[\mathbb{W}(\z,K)\sim\delta(\z,\x_m),\]
    where $\delta$ denotes the Dirac delta function.
    \item If $Q>3M$ and $\omega_K\longrightarrow+\infty$, then
    \[\mathbb{W}(\z,K)\sim\frac{K}{\omega_K-\omega_1}\sum_{m=1}^{M}\sum_{q=1}^{Q}\frac{1}{\sqrt{|\z-\x_m|^2-\left(\vt_q\cdot(\z-\x_m)\right)^2}}.\]
  \end{enumerate}
\end{thm}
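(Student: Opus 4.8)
The plan is to strip the unknown singular vectors out of (\ref{MultiFrequencyImagingFunction}) and then replace every angular and every frequency sum by an integral, so that the whole functional collapses onto Bessel functions. First I would invoke the identification (\ref{RelationSV}), $\vU_m(\omega_k)\simeq\vW_\mathrm{E}(\x_m;\omega_k)$ and $\overline{\vV}_m(\omega_k)\simeq\vW_\mathrm{F}(\x_m;\omega_k)$, so that each summand of (\ref{MultiFrequencyImagingFunction}) becomes the product of two explicit inner products $\big(\vW_\mathrm{E}^*(\z;\omega_k)\vW_\mathrm{E}(\x_m;\omega_k)\big)\big(\vW_\mathrm{F}^*(\z;\omega_k)\vW_\mathrm{F}(\x_m;\omega_k)\big)$. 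Because $\vE$ carries the phases $e^{-i\omega\vv_p\cdot\z}$ while $\vF$ carries $e^{+i\omega\vt_q\cdot\z}$, the two factors are complex conjugates once both direction sets sweep all of $\mathbb{S}^1$; hence every summand is real and nonnegative, and the outer modulus may be dropped.

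The heart of the argument is the reduction of one inner product to a Bessel function. Writing $\vv_p=(\cos\theta_p,\sin\theta_p)$ and using that the $\set{\vv_p}$ are equidistributed with $P$ large, I would pass from the sum to $\frac{1}{2\pi}\int_0^{2\pi}(\va\cdot[1,\vv(\theta)])^2 e^{-i\omega\vv(\theta)\cdot(\x_m-\z)}\,d\theta$, divided by its value at $\z=\x_m$ (the $\z$-independent normalization $|\vE|^2$). Setting $r=|\z-\x_m|$ and rotating the angular variable so that $\vv(\theta)\cdot(\x_m-\z)=r\cos\psi$, the integral is evaluated through the standard representation $\int_0^{2\pi}e^{iz\cos\psi}\cos(n\psi)\,d\psi=2\pi i^{n}J_n(z)$. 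With the suitable vector $\va$ chosen so that its constant component governs the average (annihilating the odd, $J_1$-type, and second-harmonic, $J_2$-type, directional contributions), each normalized inner product reduces to $J_0(\omega r)$, and the conjugate product becomes $J_0(\omega r)^2$. Thus the angular step yields $\mathbb{W}(\z,K)\sim\sum_{k=1}^K\sum_{m=1}^M J_0(\omega_k|\z-\x_m|)^2$.

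For claim (1) I would treat the frequency sum identically: for large $K$ with $\set{\omega_k}$ equidistributed on $[\omega_1,\omega_K]$, the Riemann sum gives $\mathbb{W}(\z,K)\sim\frac{K}{\omega_K-\omega_1}\sum_{m=1}^M\int_{\omega_1}^{\omega_K}J_0(\omega|\z-\x_m|)^2\,d\omega$. The stated closed form then follows from the single Bessel identity $\frac{d}{d\omega}\big[\omega(J_0(\omega t)^2+J_1(\omega t)^2)\big]=J_0(\omega t)^2-J_1(\omega t)^2$, which I would verify from $J_0'=-J_1$ and $J_1'(x)=J_0(x)-x^{-1}J_1(x)$; integrating this and adding back $\int_{\omega_1}^{\omega_K}J_1(\omega t)^2\,d\omega$ rewrites $\int J_0^2\,d\omega$ as exactly the bracket in (\ref{StructureMFIF}).

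Claims (2) and (3) are the limiting regimes of the same expression. For (2) I would let $\omega_K\to+\infty$ in (1): the asymptotics $J_0(x)^2+J_1(x)^2\sim 2/(\pi x)$ keep $\Phi(r;\omega_K)$ bounded for $r>0$, whereas $\Phi(0;\omega)=\omega$, so the prefactor $K/(\omega_K-\omega_1)$ suppresses every $\z\ne\x_m$ while leaving finite mass at $\z=\x_m$, i.e.\ a Dirac-type concentration. For (3) the weaker hypothesis $Q>3M$ forbids replacing the $\vt_q$-sum by an integral, so one factor remains the finite sum $\frac{1}{Q}\sum_q e^{i\omega\vt_q\cdot(\x_m-\z)}$ while the other stays $J_0(\omega r)$; letting $\omega_K\to+\infty$ and applying the cosine transform $\int_0^{\infty} J_0(\omega r)\cos(\omega s)\,d\omega=(r^2-s^2)^{-1/2}$, valid for $|s|<r$ (with $s=\vt_q\cdot(\z-\x_m)$, the conjugate sine transform vanishing there), produces the announced sum. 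I expect the principal obstacle to be precisely these $\omega_K\to+\infty$ limits: the relevant Bessel integrals converge only conditionally or distributionally, so the interchange of the infinite-frequency limit with the finite direction sums, and the delta interpretation in (2), need careful justification; a secondary difficulty is showing that the $J_1$- and $J_2$-type angular terms coming from a general suitable $\va$ are genuinely negligible rather than themselves the source of the ghost replicas.
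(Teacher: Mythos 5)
Your proposal follows essentially the same route as the paper: replace the singular vectors by the steering vectors via (\ref{RelationSV}), turn the angular sums into integrals over $\mathbb{S}^1$ to obtain $J_0(\omega_k|\z-\x_m|)^2$, turn the frequency sum into $\frac{K}{\omega_K-\omega_1}\int_{\omega_1}^{\omega_K}J_0(\omega|\z-\x_m|)^2\,d\omega$, and then use the antiderivative identity $\int J_0(t)^2dt=t\left(J_0(t)^2+J_1(t)^2\right)+\int J_1(t)^2dt$ (which you rederive from $J_0'=-J_1$ instead of citing a table), the large-argument Bessel asymptotics for (2), and the Hankel/cosine transform $\int_0^\infty e^{ia\omega}J_0(b\omega)\,d\omega=(b^2-a^2)^{-1/2}$ for (3). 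Your added caveats about the conditional convergence of the $\omega_K\to+\infty$ limits and about the $J_1$-type angular contributions from a general $\va$ (which the paper defers to its Section \ref{sec:3-2} discussion of ghost replicas) are reasonable but do not change the argument.
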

\begin{proof}
First, assume that $Q$ is sufficiently large such that $Q\gg 3M$ and that $\omega_K<+\infty$. We denote $\triangle\vv_p:=|\vv_{p}-\vv_{p-1}|$ for $p=2,3,\cdots,P$, and $\vv_1:=|\vv_{1}-\vv_{N}|$. Notation $\triangle\vt_q$ can be defined analogously.

Plugging the orthonormal relationship (\ref{OrthonormalVectors}) into (\ref{MultiFrequencyImagingFunction}) and applying \cite[Lemma 4.1]{G}, we observe that
\begin{align}
\begin{aligned}\label{ImagingFunctionBessel}
  \mathbb{W}(\z,K)&\sim\left|\sum_{k=1}^{K}\sum_{m=1}^{M}\bigg(\vW_\mathrm{E}^*(\z;\omega_k)\vW_\mathrm{E}(\x_m;\omega_k)\bigg)\bigg(\vW_\mathrm{F}^*(\z;\omega_k)\vW_\mathrm{F}(\x_m;\omega_k)\bigg)\right|\\
  &\approx\left|\sum_{k=1}^{K}\sum_{m=1}^{M}\left(\sum_{p=1}^{P}e^{i\omega_k\vv_p\cdot(\z-\x_m)}\right)\left(\sum_{q=1}^{Q}e^{i\omega_k\vt_q\cdot(\z-\x_m)}\right)\right|\\
  &\approx\left|\sum_{k=1}^{K}\sum_{m=1}^{M}\left(\frac{1}{2\pi}\sum_{p=1}^{P}e^{i\omega_k\vv_p\cdot(\z-\x_m)}\triangle\vv_p\right)\left(\frac{1}{2\pi}\sum_{q=1}^{Q}e^{i\omega_k\vt_q\cdot(\z-\x_m)}\triangle\vt_q\right)\right|\\
  &\approx\frac{1}{4\pi^2}\left|\sum_{k=1}^{K}\sum_{m=1}^{M}\left(\int_{\mathbb{S}^1}e^{i\omega_k\vt\cdot(\z-\x_m)}d\vt\right)^2\right|
  \approx\frac{K}{\omega_K-\omega_1}\left|\sum_{m=1}^{M}\int_{\omega_1}^{\omega_K}J_0(\omega|\z-\x_m|)^2d\omega\right|.
\end{aligned}
\end{align}
\begin{enumerate}
  \item Applying the indefinite integral (see \cite[Page 35]{R})
    \[\int J_0(t)^2dt=t\bigg(J_0(t)^2+J_1(t)^2\bigg)+\int J_1(t)^2dt\]
    with a change of variable $\omega|\z-\x_m|=t$ yields
    \begin{align}
    \begin{aligned}\label{StructureImaging}
      \mathbb{W}(\z,K)\sim&\frac{K}{\omega_K-\omega_1}\left|\sum_{m=1}^{M}\frac{1}{|\z-\x_m|}\int_{\omega_1|\z-\x_m|}^{\omega_K|\z-\x_m|}J_0(t)^2dt\right|\\ =&\frac{K}{\omega_K-\omega_1}\left|\sum_{m=1}^{M}\left\{\omega_K\bigg(J_0(\omega_K|\z-\x_m|)^2+J_1(\omega_K|\z-\x_m|)^2\bigg)+\int_{0}^{\omega_K}J_1(\omega|\z-\x_m|)^2d\omega\right.\right.\\
      &\left.\left.-\omega_1\bigg(J_0(\omega_1|\z-\x_m|)^2+J_1(\omega_1|\z-\x_m|)^2\bigg)-\int_{0}^{\omega_1}J_1(\omega|\z-\x_m|)^2d\omega\right\}\right|\\
      =&\frac{K}{\omega_K-\omega_1}\left|\sum_{m=1}^{M}\bigg(\Phi(|\z-\x_m|;\omega_K)-\Phi(|\z-\x_m|;\omega_1)\bigg)\right|.
    \end{aligned}
    \end{align}
  \item Let us assume $\omega_K\longrightarrow+\infty$. If $\z=\x_m$ then it is clear $\mathbb{W}(\z,K)=\infty$. Suppose that $\z\ne\x_m$ then the following asymptotic form of Bessel function holds for $\omega|\z-\x_m|\gg|\nu^2-0.25|$,
    \begin{equation}\label{AsymptoticBessel}
      J_{\nu}(\omega|\z-\x_m|)\approx\sqrt{\frac{2}{\omega\pi|\z-\x_m|}}\cos\left\{\omega|\z-\x_m|-\frac{\nu\pi}{2}-\frac{\pi}{4} +\mathcal{O}\left(\frac{1}{\omega|\z-\x_m|}\right)\right\},
    \end{equation}
    where $\nu$ denotes a positive integer. Applying this asymptotic form to (\ref{StructureImaging}), we can easily observe that $\mathbb{W}(\z,K)\approx0$. Hence, we conclude that
    \[\mathbb{W}(\z,K)\sim\delta(\z,\x_m).\]
  \item Suppose that $Q>3M$ but $Q\not\gg3M$ and $\omega_K\longrightarrow+\infty$. Then
  \begin{multline*}
    \mathbb{W}(\z,K)\approx\left|\sum_{k=1}^{K}\sum_{m=1}^{M}\left(\sum_{p=1}^{P}e^{i\omega_k\vv_p\cdot(\z-\x_m)}\right)\left(\sum_{q=1}^{Q}e^{i\omega_k\vt_q\cdot(\z-\x_m)}\right)\right|\\
    \sim\frac{K}{\omega_K-\omega_1}\left|\sum_{m=1}^{M}\sum_{q=1}^{Q}\int_{\omega_1}^{\omega_K}e^{i\omega\vt_q\cdot(\z-\x_m)}J_0(\omega|\z-\x_m|)d\omega\right|.
  \end{multline*}
    Since $\vt_q\in\mathbb{S}^1$, the following holds:
    \[|\z-\x_m|^2-\left(\vt_q\cdot(\z-\x_m)\right)^2=|\z-\x_m|^2\left\{1-\left(\vt_q\cdot\frac{\z-\x_m}{|\z-\x_m|}\right)^2\right\}\geq0.\]
    Hence, applying the identity (see \cite[formula 6.671, page 717]{GR})
    \[\int_0^\infty e^{iat}J_\nu(bt)dt=\frac{1}{\sqrt{b^2-a^2}}\left\{\cos\left(\nu\sin^{-1}\frac{a}{b}\right)+i\sin\left(\nu\sin^{-1}\frac{a}{b}\right)\right\}\quad\mbox{for}\quad a<b\]
    yields
    \[\lim_{\omega_K\to+\infty}\int_{\omega_1}^{\omega_K}e^{i\omega\vt_q\cdot(\z-\x_m)}J_0(\omega|\z-\x_m|)^2d\omega\approx \frac{1}{\sqrt{|\z-\x_m|^2-\left(\vt_q\cdot(\z-\x_m)\right)^2}}.\]
\end{enumerate}
\end{proof}

\begin{rem}\label{RemarkImagingFunction1}Theorem \ref{TheoremMultiFrequencyImaging} indicates certain properties of (\ref{MultiFrequencyImagingFunction}), which are summarized as follows:
\begin{enumerate}
  \item $J_0(x)$ has an oscillation property. Hence, if a small value of $K$ is considered, the identification of shape of $\Upsilon$ will be associated with ghost replicas.
  \item On the other hand, if a large number of frequencies are applied, $\mathbb{W}(\z,K)$ will exhibit an accurate shape of $\Upsilon$. Note that this fact can be validated via Statistical Hypothesis Testing (refer to \cite{AGKPS} for more details).
  \item When $Q$ is small, $\mathbb{W}(\z,K)$ plots a large magnitude at $\z$ satisfying
      \[\z=\x_m\in\Upsilon\quad\mbox{and}\quad\vt_q=\pm\frac{\z-\x_m}{|\z-\x_m|}.\]
      This means that $\mathbb{W}(\z,K)$ produces not only the shape of $\Upsilon$ but also unexpected ghost replicas.
  \item If low frequencies $\omega_k$ are applied such that
  \[\omega_k|\z-\x_m|\approx0,\]
  then $\mathbb{W}(\z,K)$ will fail to produce the shape of $\Upsilon$. However, if sufficiently high frequencies $\omega_k$ are applied, then following holds:
      \[\int_{\omega_1}^{\omega_K}J_1(\omega|\z-\x_m|)^2d\omega\ll \mathcal{O}(\omega_K).\]
      Therefore, the last term of (\ref{StructureMFIF}) can be negligible and hence the shape of $\Upsilon$ can be successfully imaged by $\mathbb{W}(\z,K)$. This is the reason why applying high frequency yields good results (see \cite{JKHP}).
  \item For a small value of $K$, if $\omega_K\longrightarrow+\infty$, then it is expected that $\mathbb{W}(\z,K)$ will yields a good result. However, this is an ideal assumption.
\end{enumerate}
\end{rem}

\subsection{Permeability contrast case: why do two ghost replicas appear in the numerical experiments?}\label{sec:3-2}
In section \ref{sec:2}, although we did not focus on the vector $\va$, it is important. In order to determine the influence of $\va$ on imaging performance, let us define vectors $\vE(\z;\omega_k)\in\mathbb{C}^{P\times1}$ and $\vF(\z;\omega)\in\mathbb{C}^{Q\times1}$ in (\ref{VecEF}) as
\begin{align}
\begin{aligned}\label{VecEFP}
  \vE(\z;\omega_k)&=\bigg[e^{-i\omega\vv_1\cdot\z},e^{-i\omega\vv_2\cdot\z},\cdots,e^{-i\omega\vv_P\cdot\z}\bigg]^\mathrm{T},\\
  \vF(\z;\omega_k)&=\bigg[e^{i\omega\vt_1\cdot\z},e^{i\omega\vt_2\cdot\z},\cdots,e^{i\omega\vt_Q\cdot\z}\bigg]^\mathrm{T},
\end{aligned}
\end{align}
and corresponding unit vectors $\vW_\mathrm{E}(\z;\omega_k)$ and $\vW_\mathrm{F}(\z;\omega_k)$. Then applying \cite[Lemma 4.1]{G}, $\mathbb{W}(\z,K)$ becomes
\begin{align*}
  \mathbb{W}(\z,K)\sim&\left|\sum_{k=1}^{K}\sum_{m=1}^{M}\bigg(\vW_\mathrm{E}^*(\z;\omega_k)\vU_m(\omega_k)\bigg)\bigg(\vW_\mathrm{F}^*(\z;\omega_k)\vV_m(\omega_k)\bigg)\right|\\
  \approx&\left|\sum_{k=1}^{K}\sum_{m=1}^{M}\left(\sum_{p=1}^{P}\vv_p\cdot(\vtau(\x_m)+\veta(\x_m))e^{i\omega_k\vv_p\cdot(\z-\x_m)}\right)\left(\sum_{q=1}^{Q}\vt_q\cdot(\vtau(\x_m)+\veta(\x_m))e^{i\omega_k\vt_q\cdot(\z-\x_m)}\right)\right|\\
  \approx&\left|\sum_{k=1}^{K}\sum_{m=1}^{M}\left(\int_{\mathbb{S}^1}\vt\cdot(\vtau(\x_m)+\veta(\x_m))e^{i\omega_k\vt\cdot(\z-\x_m)}d\vt\right)^2\right|\\ \sim&\left|\sum_{m=1}^{M}\int_{\omega_1}^{\omega_K}\left\{\left(\frac{\z-\x_m}{|\z-\x_m|}\cdot\bigg(\vtau(\x_m)+\veta(\x_m)\bigg)\right)J_1(\omega|\z-\x_m|)\right\}^2d\omega\right|.
\end{align*}

Note that $J_1(\omega x)$ is maximum at two points $x_1$ and $x_2$, and is symmetric with respect to $x$ (see Figure \ref{FunctionTheta}). This means that $\mathbb{W}(\z,K)$ plots $0$ on $\Upsilon$ and produces two ghost replicas in the neighborhood of $\Upsilon$ (see Figure \ref{FigureT1P}). To obtain a good result, on the basis of the structure of (\ref{perturbation}) and (\ref{VecEF}), $\va$ must be a linear combination of $\vtau(\x_m)$ and $\veta(\x_m)$. Unfortunately, we have no \textit{prior} information of $\Upsilon$. Hence, finding an optimal $\va$ is an interesting research topic.

\begin{figure}[!ht]
\begin{center}
\includegraphics[width=0.9\textwidth]{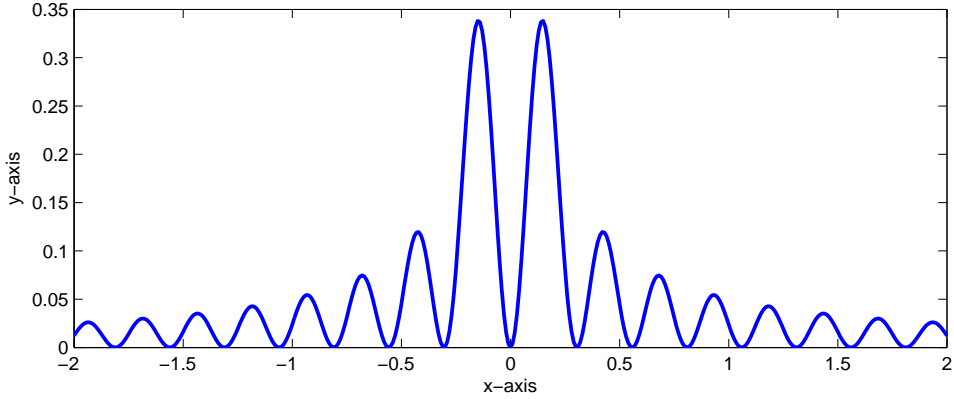}
\caption{\label{FunctionTheta}Graph of $y=J_1(\omega|x|)^2$ for $\omega=\frac{2\pi}{0.5}$.}
\end{center}
\end{figure}

\subsection{Weighted multi-frequency imaging functional: Introduction and analysis}
Remark \ref{RemarkImagingFunction1} shows an optimal condition ($P$ and $Q$ are sufficiently large and $\omega_K\longrightarrow+\infty$) for imaging. However, in real-world applications, a high frequency such as $+\infty$ cannot be applied. Therefore, eliminating the last term of (\ref{StructureMFIF}) should be a method of improvement. For this, following imaging functional weighted by $\omega_k$ is introduced in \cite{P5}:
\begin{equation}\label{WeightedMultiFrequencyImagingFunction1}
  \mathbb{W}(\z,K;1)=\left|\sum_{k=1}^{K}\sum_{m=1}^{M_k}\omega_k\bigg(\vW_\mathrm{E}^*(\z;\omega_k)\vU_m(\omega_k)\bigg)\bigg(\vW_\mathrm{F}^*(\z;\omega_k)\overline{\vV}_m(\omega_k)\bigg)\right|.
\end{equation}
This is an improved version of (\ref{MultiFrequencyImagingFunction}). Based on what is given above, it is natural to consider the following multi-frequency imaging functional weighted by $(\omega_k)^n$:
\begin{equation}\label{WeightedMultiFrequencyImagingFunction}
  \mathbb{W}(\z,K;n)=\left|\sum_{k=1}^{K}\sum_{m=1}^{M_k}(\omega_k)^n\bigg(\vW_\mathrm{E}^*(\z;\omega_k)\vU_m(\omega_k)\bigg)\bigg(\vW_\mathrm{F}^*(\z;\omega_k)\overline{\vV}_m(\omega_k)\bigg)\right|,\quad n\in\mathbb{N}.
\end{equation}
Consequently, the following question arises:
\begin{quote}
\textit{Does a large value of order $n$ significantly contributes to the imaging performance?}
\end{quote}
The answer is \textit{no}. To find out how, the structure of (\ref{WeightedMultiFrequencyImagingFunction}) is explored in the following Theorem.

\begin{thm}\label{TheoremStructureWMFIF}
  Let $K$ and $\omega_K$ have sufficiently large values and $\omega_K<+\infty$. Then for a natural number $n$, (\ref{WeightedMultiFrequencyImagingFunction}) satisfies
  \begin{equation}\label{StructureWMFIF}
    \mathbb{W}(\z,K;n)\sim \frac{K}{\omega_K-\omega_1}\sum_{m=1}^{M}\bigg(\hat{\Phi}(|\z-\x_m|,\omega_K;n)-\hat{\Phi}(|\z-\x_m|,\omega_1;n)\bigg),
  \end{equation}
  where $\hat{\Phi}$ is given by
  \[\hat{\Phi}(|\z-\x_m|,\omega;n):=\omega^{n+1}\bigg(J_0(\omega|\z-\x_m|)^2+J_1(\omega|\z-\x_m|)^2\bigg).\]
\end{thm}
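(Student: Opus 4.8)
The plan is to follow essentially the same computational pipeline used in Theorem~\ref{TheoremMultiFrequencyImaging}, but carrying the weight factor $(\omega_k)^n$ through the reduction and the integral evaluation. First I would substitute the orthonormal relationship (\ref{OrthonormalVectors}) into the weighted functional (\ref{WeightedMultiFrequencyImagingFunction}) and repeat the chain of approximations that produced (\ref{ImagingFunctionBessel}): replace the singular-vector inner products by the test-vector inner products, expand $\vW_\mathrm{E}$ and $\vW_\mathrm{F}$ as sums of plane waves over $\{\vv_p\}$ and $\{\vt_q\}$, and convert the $\triangle\vv_p$- and $\triangle\vt_q$-weighted Riemann sums into integrals over $\mathbb{S}^1$ via \cite[Lemma 4.1]{G}. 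The only difference from the unweighted case is that the summand now carries an extra factor $(\omega_k)^n$; since $Q\gg 3M$ and $P$ is large, each angular integral $\frac{1}{2\pi}\int_{\mathbb{S}^1}e^{i\omega_k\vt\cdot(\z-\x_m)}\,d\vt$ equals $J_0(\omega_k|\z-\x_m|)$, so the double product becomes $(\omega_k)^n J_0(\omega_k|\z-\x_m|)^2$.

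Next I would pass from the sum over frequencies $\omega_k$ to a frequency integral. Because $K$ is large and the $\omega_k$ are (implicitly) equally spaced on $[\omega_1,\omega_K]$ with spacing $\triangle\omega\approx(\omega_K-\omega_1)/K$, the Riemann-sum-to-integral step gives
\[
  \mathbb{W}(\z,K;n)\sim\frac{K}{\omega_K-\omega_1}\left|\sum_{m=1}^{M}\int_{\omega_1}^{\omega_K}\omega^n J_0(\omega|\z-\x_m|)^2\,d\omega\right|,
\]
which reduces to the $n=0$ integral of (\ref{ImagingFunctionBessel}) when the weight is absent. The remaining task is purely to evaluate $\int \omega^n J_0(\omega t)^2\,d\omega$ in closed form, where $t=|\z-\x_m|$.

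For that evaluation I would use the change of variable $s=\omega t$, giving $t^{-(n+1)}\int s^n J_0(s)^2\,ds$, and then apply the indefinite-integral identity already cited in the proof of Theorem~\ref{TheoremMultiFrequencyImaging}, namely $\int J_0(s)^2\,ds=s\bigl(J_0(s)^2+J_1(s)^2\bigr)+\int J_1(s)^2\,ds$, in combination with the standard Bessel recurrence/derivative relations $\frac{d}{ds}\bigl(s^{k}J_0(s)^2\bigr)$ and the Lommel-type formulas for $\int s^n J_\nu(s)^2\,ds$. The claimed antiderivative $\hat{\Phi}(t,\omega;n)=\omega^{n+1}\bigl(J_0(\omega t)^2+J_1(\omega t)^2\bigr)$ suggests that, after differentiating $\hat{\Phi}$ with respect to $\omega$ and using $J_0'=-J_1$ together with the relation $J_1'=J_0-J_1/(\omega t)$, the dominant term reproduces $\omega^n J_0(\omega t)^2$ while the residual terms (involving $J_1^2$ and lower powers of $\omega$) are treated as subdominant and absorbed into the $\sim$ relation, exactly as the $\int J_1^2$ term was discarded in passing from (\ref{StructureImaging}) to the final line of part~(1). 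Evaluating between the limits $\omega_1 t$ and $\omega_K t$ and dividing by $t^{n+1}$ then yields the stated boundary-term formula (\ref{StructureWMFIF}).

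The main obstacle I anticipate is this last verification: strictly speaking $\hat{\Phi}$ is only the \emph{leading} antiderivative, and differentiating it produces correction terms proportional to $\omega^{n}J_1^2$ and to $n\,\omega^{n}(J_0^2+J_1^2)$ that do not cancel exactly against $\omega^n J_0^2$. Making the claim rigorous therefore requires justifying, in the regime where $\omega_K$ is large, that these correction contributions are of lower order than the retained boundary terms $\hat{\Phi}(t,\omega_K;n)$ and $\hat{\Phi}(t,\omega_1;n)$ — which is plausible using the asymptotic form (\ref{AsymptoticBessel}), since $J_\nu(\omega t)^2=\mathcal{O}\bigl((\omega t)^{-1}\bigr)$ makes each $J_1^2$ correction one power of $\omega$ smaller than the corresponding $J_0^2+J_1^2$ boundary value. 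The delicate point is that the identity is asserted as an asymptotic equivalence ($\sim$) rather than an exact equality, so I would state precisely which terms are being neglected and confirm that the neglected integral of $\omega^n J_1^2$ is genuinely $o\bigl(\omega_K^{n+1}(J_0^2+J_1^2)\bigr)$ under the high-frequency hypothesis, mirroring the negligibility argument already invoked in Remark~\ref{RemarkImagingFunction1}(4).
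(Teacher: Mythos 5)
Your proposal is correct and follows essentially the same route as the paper: reduce $\mathbb{W}(\z,K;n)$ to $\frac{K}{\omega_K-\omega_1}\left|\sum_m\int_{\omega_1}^{\omega_K}\omega^n J_0(\omega|\z-\x_m|)^2\,d\omega\right|$ exactly as in (\ref{ImagingFunctionBessel}), identify $\hat{\Phi}$ as the leading part of the antiderivative of $\omega^n J_0(\omega t)^2$, and dominate the residual terms via the asymptotic form (\ref{AsymptoticBessel}). The only difference is one of bookkeeping: where you propose to differentiate the candidate $\hat{\Phi}$ and bound the leftover $\omega^n J_1^2$-type terms, the paper instead splits into odd and even $n$ and invokes the explicit recurrence formulas (\ref{Recurrence1}) and (\ref{Recurrence2}) for $\int t^{2n+1}J_0(t)^2\,dt$ and $\int t^{2n}J_0(t)^2\,dt$, then shows the non-boundary terms $\Psi_2,\ldots,\Psi_4$ (resp.\ $\Psi_6,\Psi_8$) are subdominant under the same hypotheses $\z\ne\x_m$ and $|\z-\x_m|\gg\frac{3}{4\omega_K}$ that you correctly anticipate are needed.
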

\begin{proof} First, let us assume that $n$ is an odd number, say $n=2s+1$. In order to examine the structure of (\ref{WeightedMultiFrequencyImagingFunction}), we recall a recurrence formula (see \cite[page 14]{R})
\begin{align}
\begin{aligned}\label{Recurrence1}
  \int t^{2n+1}J_0(t)^2dt=\frac{1}{2n+1}\left\{\frac{t^{2n+2}}{2}\bigg(J_0(t)^2+J_1(t)^2\bigg)
  +n^2t^{2n}J_0(t)^2+nt^{2n+1}J_0(t)J_1(t)-2n^3\int t^{2n-1}J_0(t)^2dt\right\}.
\end{aligned}
\end{align}
Then, applying (\ref{Recurrence1}) with a change of variable $\omega|\z-\x_m|=t$ in (\ref{WeightedMultiFrequencyImagingFunction}) induces
\begin{align*}
  \mathbb{W}(\z,K;2s+1)\sim \frac{K}{\omega_K-\omega_1}&\left|\sum_{m=1}^{M}\frac{1}{|\z-\x_m|^{2s+2}}\int_{\omega_1|\z-\x_m|}^{\omega_K|\z-\x_m|}t^{2s+1}J_0(t)^2dt\right|\\ =\frac{K}{\omega_K-\omega_1}&\left|\sum_{m=1}^{M}\bigg(\Psi_1(|\z-\x_m|,\omega_K,\omega_1;s)+\Psi_2(|\z-\x_m|,\omega_K,\omega_1;s)\right.\\
  &\left.+\Psi_3(|\z-\x_m|,\omega_K,\omega_1;s)-\Psi_4(|\z-\x_m|,\omega_K,\omega_1;s)\bigg)\right|,
\end{align*}
where
\begin{align}\label{FunctionPsi}
\begin{aligned}
  \Psi_1(t,\omega_K,\omega_1;s)&:=\frac{(\omega_K)^{2s+2}}{4s+2}\bigg(J_0(\omega_Kt)^2+J_1(\omega_Kt)^2\bigg)-\frac{(\omega_1)^{2s+2}}{4s+2}\bigg(J_0(\omega_1t)^2+J_1(\omega_1t)^2\bigg),\\
  \Psi_2(t,\omega_K,\omega_1;s)&:=\frac{s^2}{(2s+1)t^2}\bigg((\omega_K)^{2s}J_0(\omega_Kt)^2-(\omega_1)^{2s}J_0(\omega_1t)^2\bigg),\\
  \Psi_3(t,\omega_K,\omega_1;s)&:=\frac{s(\omega_K)^{2s+1}}{(2s+1)t}J_0(\omega_Kt)J_1(\omega_Kt)-\frac{s(\omega_1)^{2s+1}}{(2s+1)t}J_0(\omega_1t)J_1(\omega_1t),
\end{aligned}
\end{align}
and
\[\Psi_4(|\z-\x_m|,\omega_K,\omega_1;s):=\frac{2s^3}{(2s+1)|\z-\x_m|^2}\mathbb{W}(\z,K;2s-1).\]

Now, we assume that $\omega_K$ is sufficiently large, $\z$ satisfies $\z\ne\x_m$ and $|\z-\x_m|\gg\frac{3}{4\omega_K}$. Then, applying the asymptotic form of the Bessel function (\ref{AsymptoticBessel}) yields
\[\frac{J_0(\omega_K|\z-\x_m|)^2}{|\z-\x_m|^2}\approx\frac{2}{\omega_K|\z-\x_m|^3\pi} \cos\bigg(\omega_K|\z-\x_m|-\frac{\pi}{4}\bigg)\ll\frac{128(\omega_K)^2}{27\pi} \cos\bigg(\omega_K|\z-\x_m|-\frac{\pi}{4}\bigg),\]
and
\begin{align*}
  \frac{J_0(\omega_K|\z-\x_m|)J_1(\omega_K|\z-\x_m|)}{|\z-\x_m|}&\approx \frac{2}{\omega_K|\z-\x_m|^2\pi}\cos\bigg(\omega_K|\z-\x_m|-\frac{\pi}{4}\bigg) \cos\bigg(\omega_K|\z-\x_m|-\frac{3\pi}{4}\bigg)\\
  &\ll\frac{32\omega_K}{9\pi}\cos\bigg(\omega_K|\z-\x_m|-\frac{\pi}{4}\bigg) \cos\bigg(\omega_K|\z-\x_m|-\frac{3\pi}{4}\bigg).
\end{align*}
Hence, we can observe that
\begin{align*}
  \Psi_2(|\z-\x_m|,\omega_K,\omega_1;s)&\ll \mathcal{O}\left((\omega_K)^{2s+2}\right)=\Psi_1(|\z-\x_m|,\omega_K,\omega_1;s),\\
  \Psi_3(|\z-\x_m|,\omega_K,\omega_1;s)&\ll \mathcal{O}\left((\omega_K)^{2s+2}\right)=\Psi_1(|\z-\x_m|,\omega_K,\omega_1;s),\\
  \Psi_4(|\z-\x_m|,\omega_K,\omega_1;s)&\ll\frac{32s^3(\omega_K)^2}{9(2s+1)}\mathbb{W}(\z,K;2s-1)=\mathcal{O}\left((\omega_K)^{2s+1}\right).
\end{align*}
This means that the terms $\Psi_2$, $\Psi_3$, and $\Psi_4$ are dominated by $\Psi_1$. Hence, we conclude that for an odd number $n$, (\ref{StructureWMFIF}) holds.

Next, let us assume that $n$ is an even number, say $n=2s$. Similar to the case of an odd number $n$, we recall a recurrence formula (see \cite[page 35]{R})
\begin{equation}\label{Recurrence2}
  \int t^{2n}J_0(t)^2dt=A_n(t)J_0(t)^2+B_n(t)J_0(t)J_1(t)+C_n(t)J_1(t)^2+D_n\int J_0(t)^2dt,
\end{equation}
where
\[A_n(t)=\sum_{r=1}^{n+1}a_r t^{2r-1},\quad B_n(t)=\sum_{r=1}^{n}b_r t^{2r},\quad C_n(t)=\sum_{r=1}^{n}c_nt^{2r+1},\]
and $a_r,b_r,c_r,$ and $D_n$ are constants. Then, applying a change of variable $t=\omega|\z-\x_m|$ in (\ref{Recurrence2}) yields
\begin{align*}
  \mathbb{W}(\z,K;2s)\sim \frac{K}{\omega_K-\omega_1}&\left|\sum_{m=1}^{M}\frac{1}{|\z-\x_m|^{2s+1}}\int_{\omega_1|\z-\x_m|}^{\omega_K|\z-\x_m|}t^{2s}J_0(t)^2dt\right|\\
  =\frac{K}{\omega_K-\omega_1}&\left|\sum_{m=1}^{M}\bigg(\Psi_5(|\z-\x_m|,\omega_K,\omega_1;s)+\Psi_6(|\z-\x_m|,\omega_K,\omega_1;s)\right.\\
  &\left.+\Psi_7(|\z-\x_m|,\omega_K,\omega_1;s)-\Psi_8(|\z-\x_m|,\omega_K,\omega_1;s)\bigg)\right|,
\end{align*}
where
\begin{align*}
  \Psi_5(t,\omega_K,\omega_1;s)&:=\frac{1}{t^{2s+1}}\sum_{r=1}^{s+1}\bigg(a_r(\omega_Kt)^{2r-1}J_0(\omega_Kt)^2 -a_r(\omega_1t)^{2r-1}J_0(\omega_1t)^2\bigg),\\
  \Psi_6(t,\omega_K,\omega_1;s)&:=\frac{1}{t^{2s+1}}\sum_{r=1}^{s}\bigg(b_r(\omega_Kt)^{2r}J_0(\omega_Kt)J_1(\omega_Kt) -b_r(\omega_1t)^{2r}J_0(\omega_1t)J_1(\omega_1t)\bigg),\\
  \Psi_7(t,\omega_K,\omega_1;s)&:=\frac{1}{t^{2s+1}}\sum_{r=1}^{s}\bigg(c_r(\omega_Kt)^{2r+1}J_1(\omega_Kt)^2 -c_r(\omega_1t)^{2r+1}J_1(\omega_1t)^2\bigg),
\end{align*}
and
\[\Psi_8(|\z-\x_m|,\omega_K,\omega_1;s)=\frac{D_n}{|\z-\x_m|^{2s}}\mathbb{W}(\z,K).\]

Now, we assume that $\omega_K$ is sufficiently large, $\z$ satisfies $\z\ne\x_m$, and $|\z-\x_m|\gg\frac{3}{4\omega_K}$. Then applying (\ref{AsymptoticBessel}) yields
\[\sum_{r=1}^{s}a_r(\omega_K|\z-\x_m|)^{2r-1}\ll\sum_{r=1}^{s}a_r\left(\frac{4}{3}\right)^{2r-1}\leq\max_{1\leq r\leq s}\{a_r\}\frac{12\cdot (4^{2s}-3^{2s})}{7\cdot 3^{2s}}.\]
Accordingly, we obtain
\begin{align*}
  \Psi_5(|\z-\x_m|,\omega_K,\omega_1;s)&\sim\bigg((\omega_K)^{2s+1}J_0(\omega_K|\z-\x_m|)^2-(\omega_1)^{2s+1}J_0(\omega_1|\z-\x_m|)^2\bigg),\\
  \Psi_6(|\z-\x_m|,\omega_K,\omega_1;s)&\ll\Psi_5(|\z-\x_m|,\omega_K,\omega_1;s),\\
  \Psi_7(|\z-\x_m|,\omega_K,\omega_1;s)&\sim\bigg((\omega_K)^{2s+1}J_1(\omega_K|\z-\x_m|)^2-(\omega_1)^{2s+1}J_1(\omega_1|\z-\x_m|)^2\bigg),\\
  \Psi_8(|\z-\x_m|,\omega_K,\omega_1;s)&\ll\left(\frac{4\omega_K}{3}\right)^{2s}D_n\mathbb{W}(\z,K)=\mathcal{O}\left((\omega_K)^{2s+1}\right).
\end{align*}
Hence, we conclude that for an even number $n$, (\ref{StructureWMFIF}) holds.
\end{proof}

\begin{rem}\label{RemarkImagingFunction2}Based on Theorem \ref{TheoremStructureWMFIF}, we can determine following properties of (\ref{WeightedMultiFrequencyImagingFunction}), as summarized below:
\begin{enumerate}\renewcommand{\theenumi}{P\arabic{enumi}}
  \item\label{P1Remark2} Increasing order $n$ in (\ref{WeightedMultiFrequencyImagingFunction}) does not influence the imaging performance. In fact, $\mathbf{n=1}$ \textbf{is the best choice for obtaining good results} because the terms $\Psi_l(|\z-\x_m|,\omega_K,\omega_1;s)$, $l=2,3,\cdots,8,$ in (\ref{WeightedMultiFrequencyImagingFunction}) completely disappear.
  \item Similar to (\ref{MultiFrequencyImagingFunction}), applying low frequencies $\omega_k$ such that $\omega_k|\z-\x_m|\approx0$ degrades the results. Hence, sufficiently high frequencies must be applied.
\end{enumerate}
\end{rem}

\section{Numerical experiments}\label{sec:4}
In this section, some numerical experiments are performed and corresponding results are exhibited. Throughout this section, the homogeneous domain $\Omega$ is selected as a unit circle and two $\gamma_j$ are chosen to describe the supporting curves of thin inclusions $\Upsilon_j$ as
\begin{align*}
  \gamma_1&=\left\{[z-0.2,-0.5z^2+0.4]^\mathrm{T}:-0.5\leq z\leq0.5\right\}\\
  \gamma_2&=\left\{[z+0.2,z^3+z^2-0.5]^\mathrm{T}:-0.5\leq z\leq0.5\right\}.
\end{align*}

The thickness of all the thin inclusion(s) is set to $h=0.015$, $\eps_0=\mu_0=1$ and $\eps_j=\mu_j=5$. For given wavelengths $\lambda_k$, the applied frequencies are $\omega_k=\frac{2\pi}{\lambda_k}$, which vary between $\omega_1=\frac{2\pi}{0.7}$ and $\omega_{K}=\frac{2\pi}{0.3}$. In every example, $K=10$ different frequencies are applied.

A set of vectors $\left\{\vv_j:j=1,2,\cdots,P\right\}$ and $\left\{\vt_l:l=1,2,\cdots,Q\right\}$ on $\mathbb{S}^1$ are selected as
\[\vv_j=\left[\cos\frac{2\pi j}{P},\sin\frac{2\pi j}{P}\right]^\mathrm{T}\quad\mbox{and}\quad
\vt_l=-\left[\cos\frac{2\pi l}{Q},\sin\frac{2\pi l}{Q}\right]^\mathrm{T},\]
respectively. In our examples, we set, $P=24$ and $Q=20$. Vector $\va$ in (\ref{VecEF}) is set to $\va=[1,0,1]^\mathrm{T}$.

For every example, white Gaussian noise with a Signal-to-Noise Ratio (SNR) of $10$dB is added to the unperturbed boundary measurement data using the MATLAB command \texttt{awgn} in order to demonstrate the effectiveness of the proposed algorithm. In order to discriminate non-zero singular values, a $0.01-$threshold strategy is adopted (see \cite{PL1,PL3} for instance).

First, let us consider the imaging of $\Upsilon_1$ when only magnetic permeability contrast exists, at a fixed frequency $\omega=\frac{2\pi}{0.5}$. From the result in section \ref{sec:3-2}, it is clear that two ghost replicas appear in the neighborhood of $\Upsilon_1$ when we select vectors $\vE(\z;\omega_k)$ and $\vF(\z;\omega)$ in (\ref{VecEFP}). Note that based on the shape of $\Upsilon_1$, $\vtau(\x_m)\approx0$ and $\veta(\x_m)\approx1$ for all $m=1,2,\cdots,M$. Hence setting of $\va=[0,0,1]^\mathrm{T}$ yields a good result while $\va=[0,1,0]^\mathrm{T}$ offers a poor result, as shown in Figure \ref{FigureT1P}.

\begin{figure}[!ht]
\begin{center}
\includegraphics[width=0.325\textwidth]{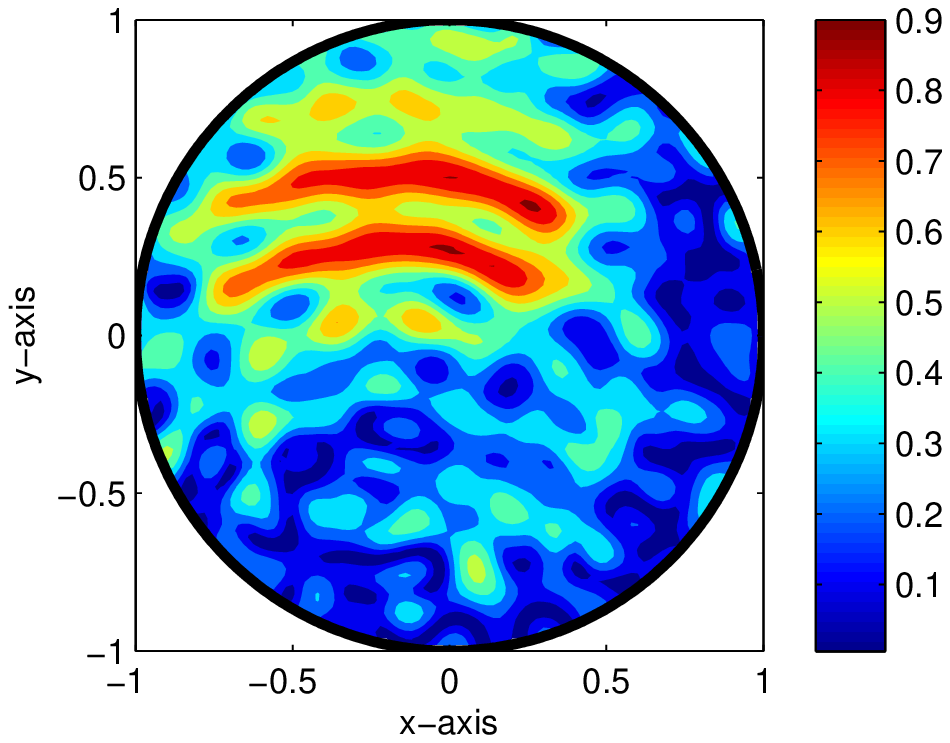}
\includegraphics[width=0.325\textwidth]{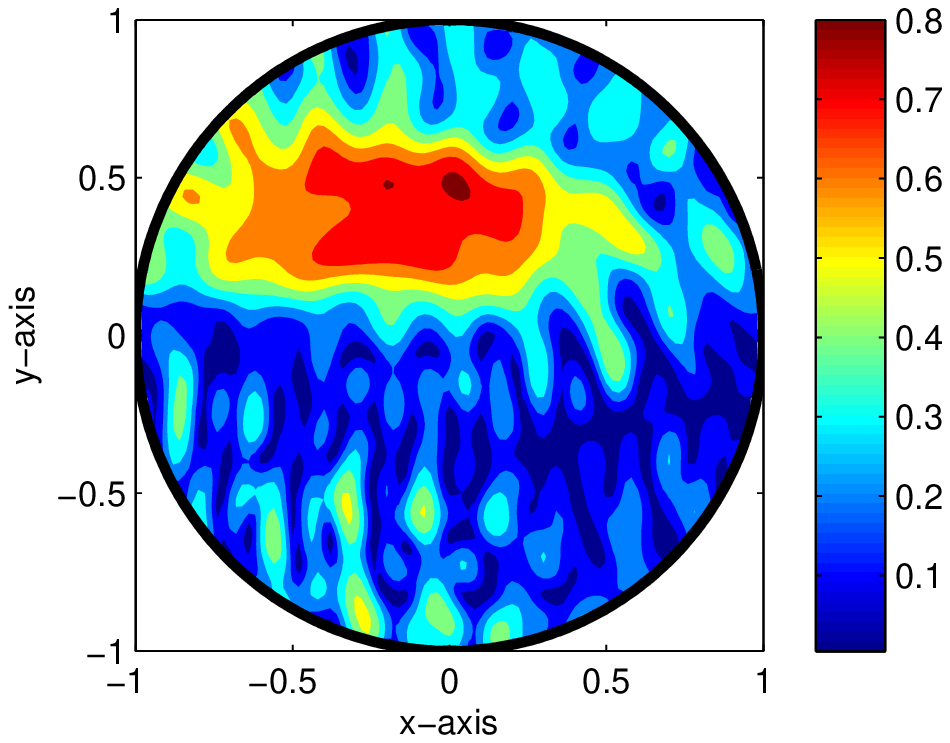}
\includegraphics[width=0.325\textwidth]{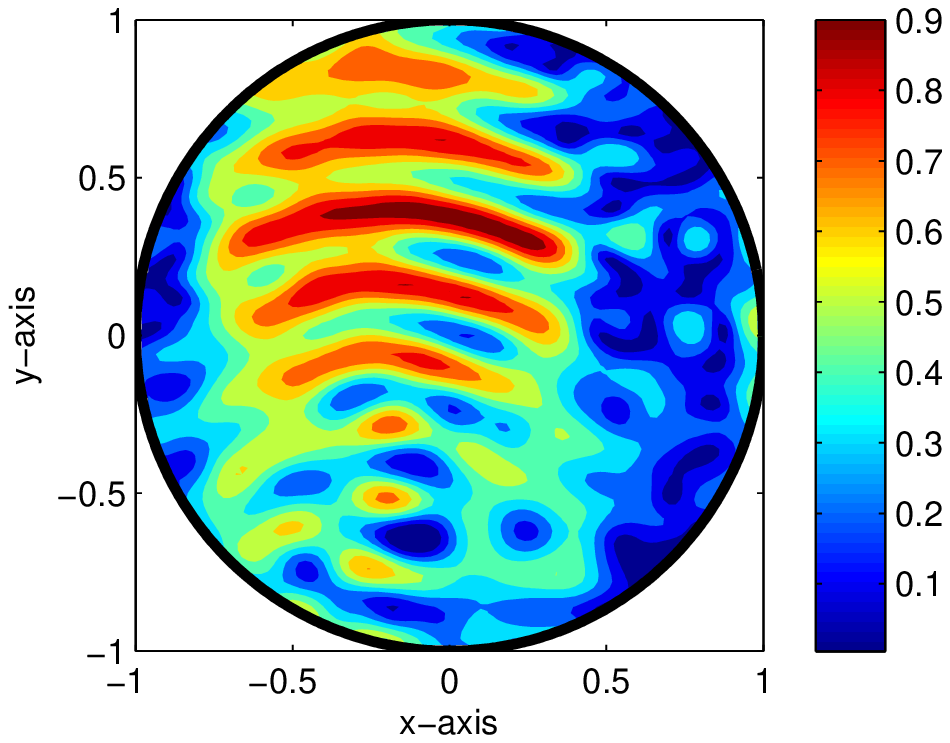}
\caption{\label{FigureT1P}[Permeability contrast case only] Maps of $\mathbb{W}(\z,0)$ with vectors in (\ref{VecEFP}) (left), $\va=[0,1,0]^\mathrm{T}$ (center), and $\va=[0,0,1]^\mathrm{T}$ (right) when the thin inclusion is $\Upsilon_1$.}
\end{center}
\end{figure}

Figure \ref{FigureT1PM} shows the effect of the number of frequencies on the imaging performance. By comparing the results with those in Figure \ref{FigureT1P}, we observe that applying multiple frequencies yields better results than applying a single frequency does.

\begin{figure}[!ht]
\begin{center}
\includegraphics[width=0.325\textwidth]{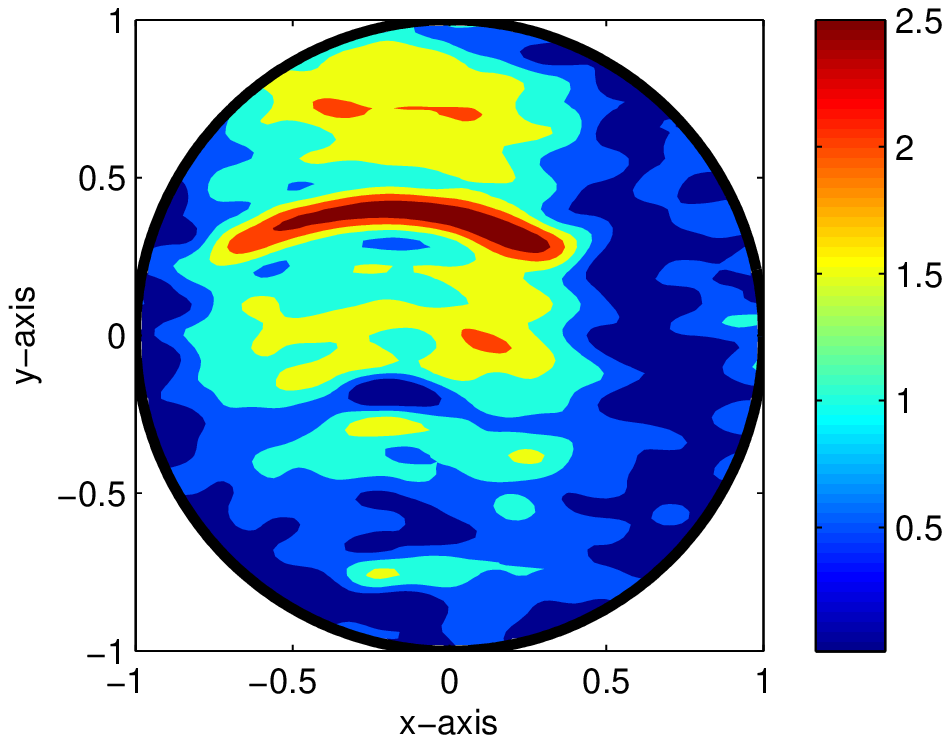}
\includegraphics[width=0.325\textwidth]{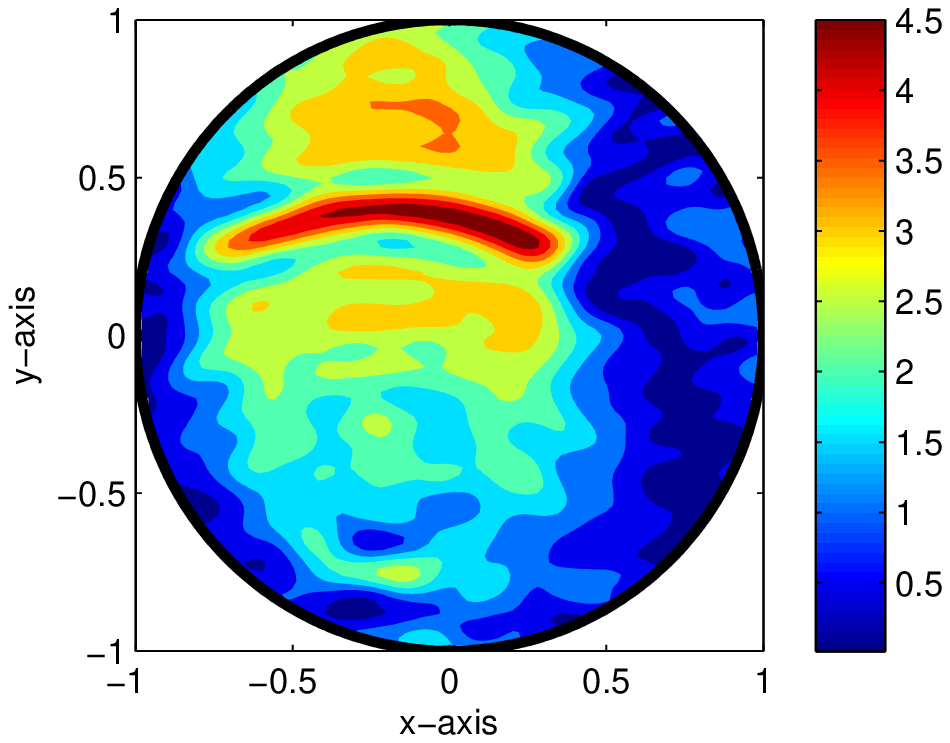}
\includegraphics[width=0.325\textwidth]{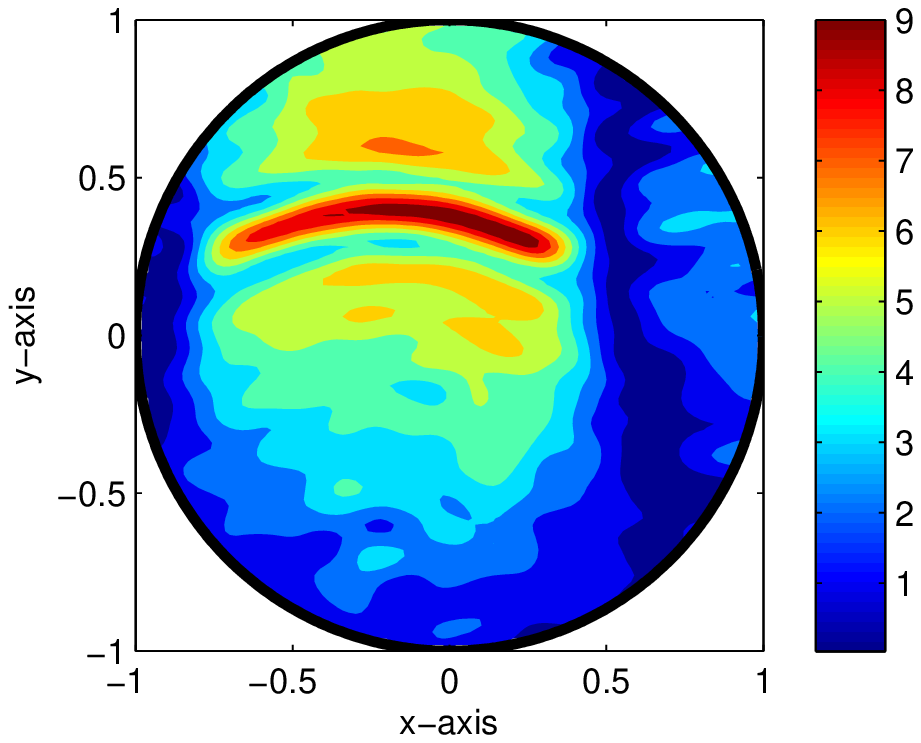}
\caption{\label{FigureT1PM}[Permeability contrast case only] Maps of $\mathbb{W}(\z,n)$ with $\va=[0,0,1]^\mathrm{T}$ for $n=3$ (left), $n=5$ (center), and $n=10$ (right) when the thin inclusion is $\Upsilon_1$.}
\end{center}
\end{figure}

Hereafter, we consider the imaging of thin inclusions when both permittivity and permeability contrast exist. Figure \ref{FigureT1} shows the maps of $\mathbb{W}(\z,10;n)$ for $n=0,1,$ and $2$. The overall results show that although $\mathbb{W}(\z,10;n)$ offers good imaging performance, $n=1$ yields better results than $n=0,2$ does, refer to (\ref{P1Remark2}) of Remark \ref{RemarkImagingFunction2}.

\begin{figure}[!ht]
\begin{center}
\includegraphics[width=0.325\textwidth]{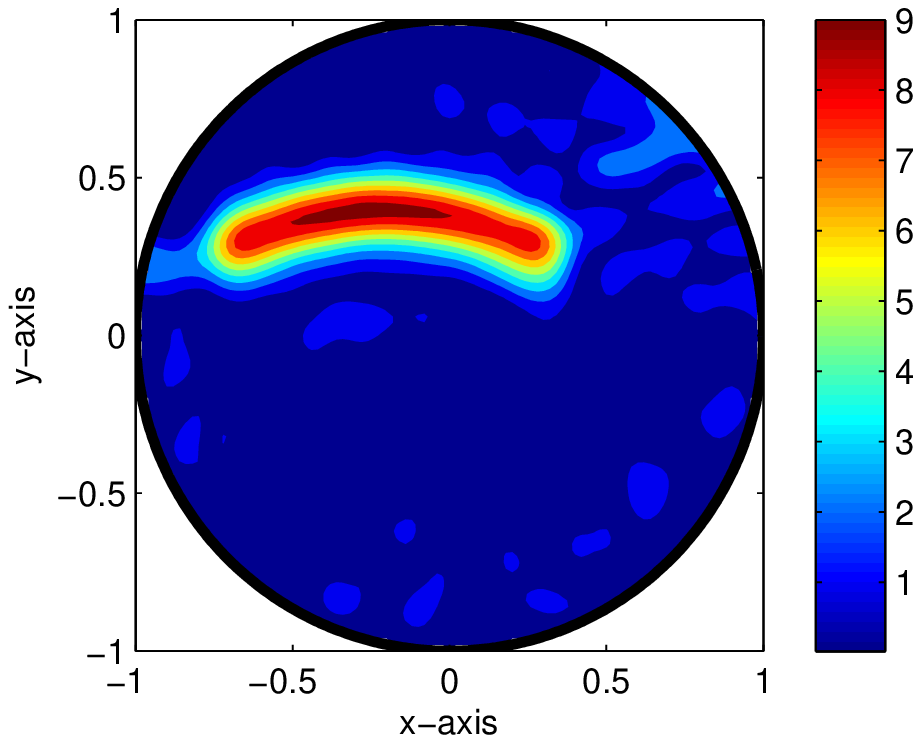}
\includegraphics[width=0.325\textwidth]{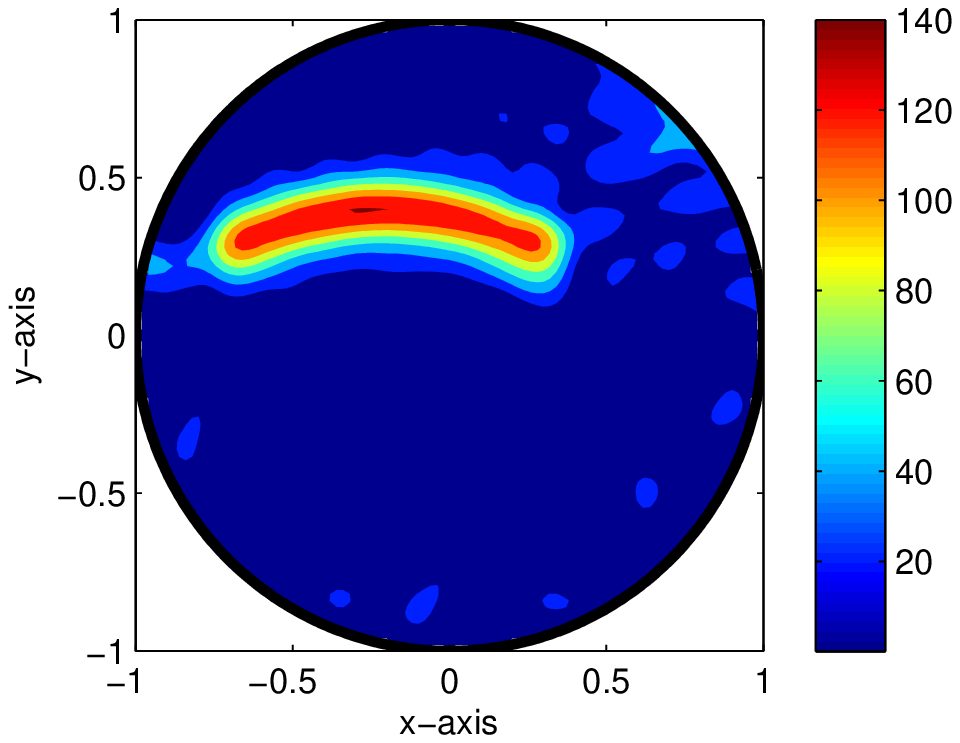}
\includegraphics[width=0.325\textwidth]{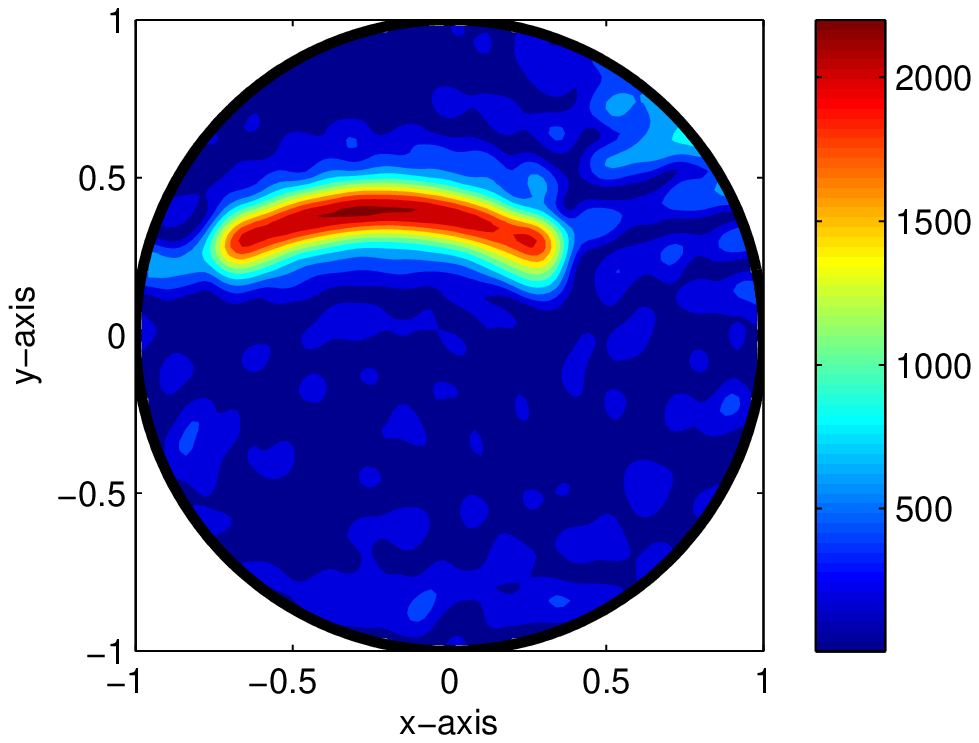}
\caption{\label{FigureT1}Maps of $\mathbb{W}(\z,10;n)$ for $n=0$ (left), $n=1$ (center), and $n=2$ (right) when the thin inclusion is $\Upsilon_1$.}
\end{center}
\end{figure}

Now, we consider the maps of $\mathbb{W}(\z,10;n)$ for $n=5,6,$ and $7$. The corresponding results illustrated in Figure \ref{FigureT1High} indicate that increasing $n$ not only contributes negligibly to the imaging performance but also generates large numbers of unexpected replicas. This result supports (\ref{P1Remark2}) of Remark \ref{RemarkImagingFunction2}.

\begin{figure}[!ht]
\begin{center}
\includegraphics[width=0.325\textwidth]{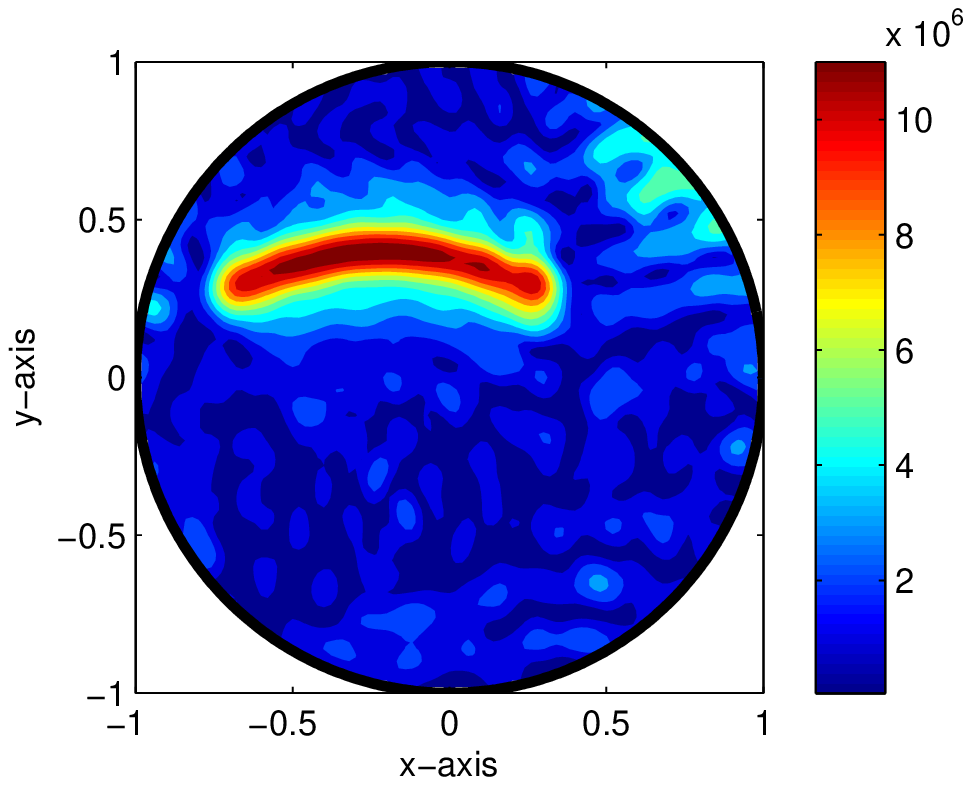}
\includegraphics[width=0.325\textwidth]{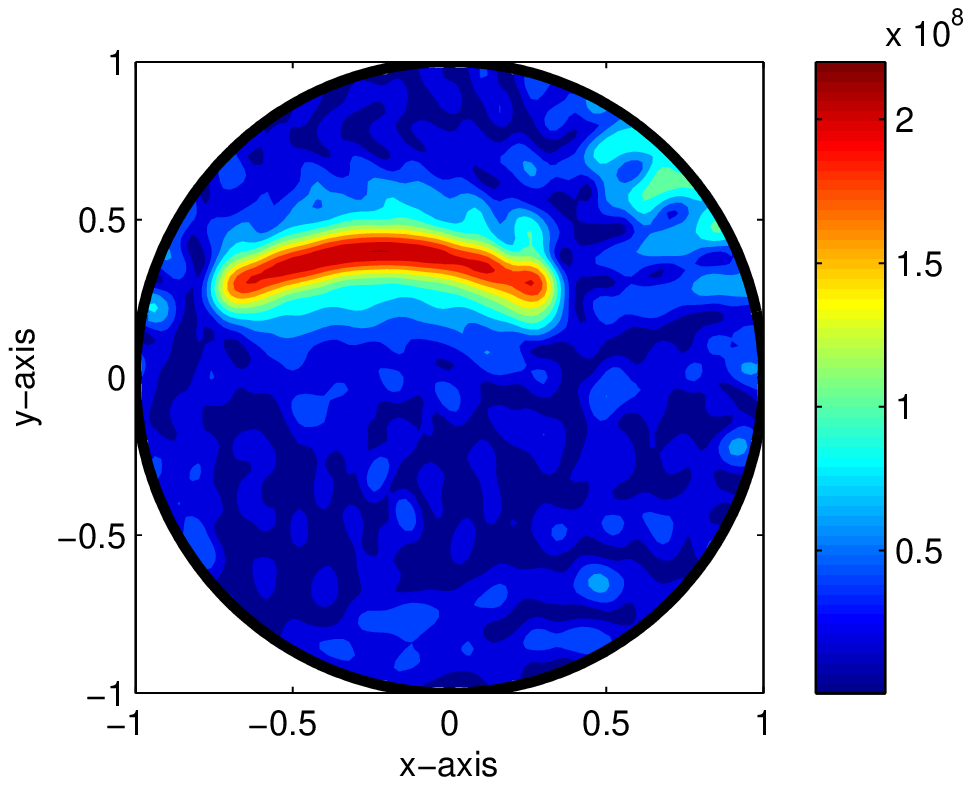}
\includegraphics[width=0.325\textwidth]{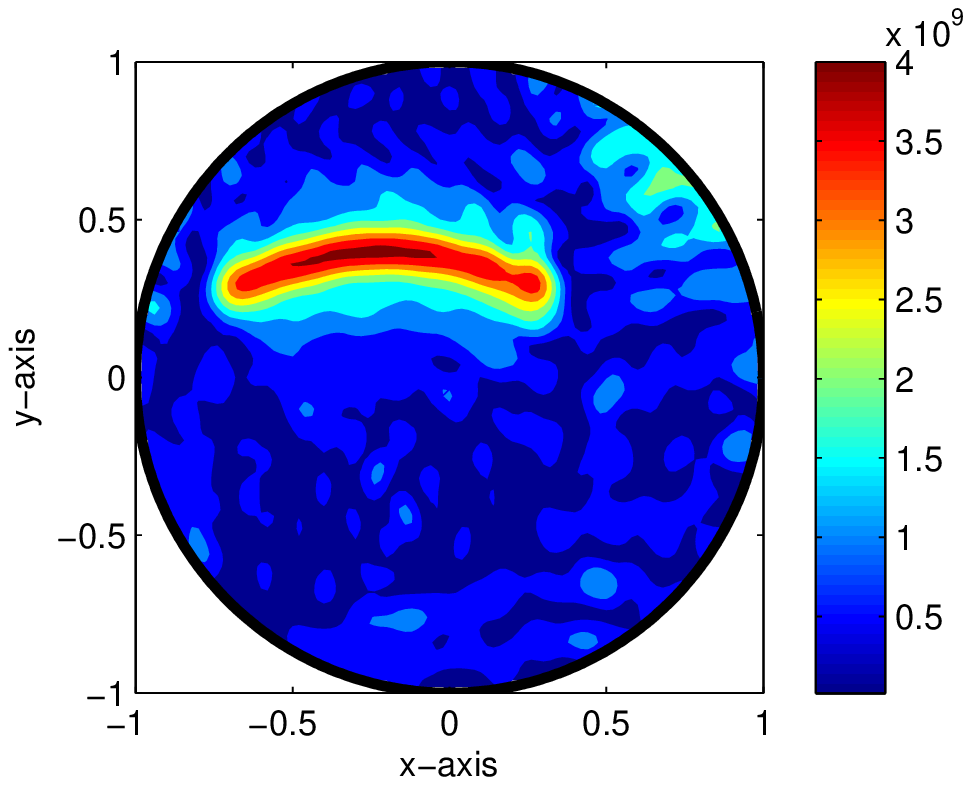}
\caption{\label{FigureT1High}Maps of $\mathbb{W}(\z,10;n)$ for $n=5$ (left), $n=6$ (center), and $n=7$ (right) when the thin inclusion is $\Upsilon_1$.}
\end{center}
\end{figure}

Figure \ref{FigureT2} shows the maps of $\mathbb{W}(\z,10;n)$ for $n=0,1,$ and $2$ when the thin inclusion is $\Upsilon_2$. Similar to Figure \ref{FigureT1}, the map of $\mathbb{W}(\z,10;1)$ demonstrates very good imaging performance, while the resolution is poor at $\z=[0.6,-0.24]^\mathrm{T}$ on the large curvature.

\begin{figure}[!ht]
\begin{center}
\includegraphics[width=0.325\textwidth]{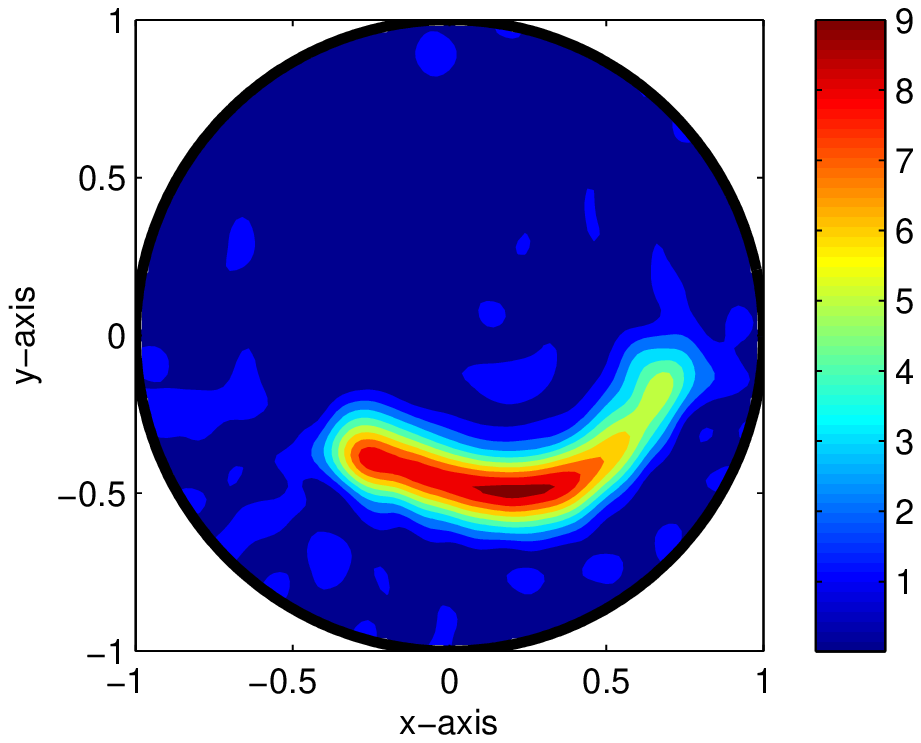}
\includegraphics[width=0.325\textwidth]{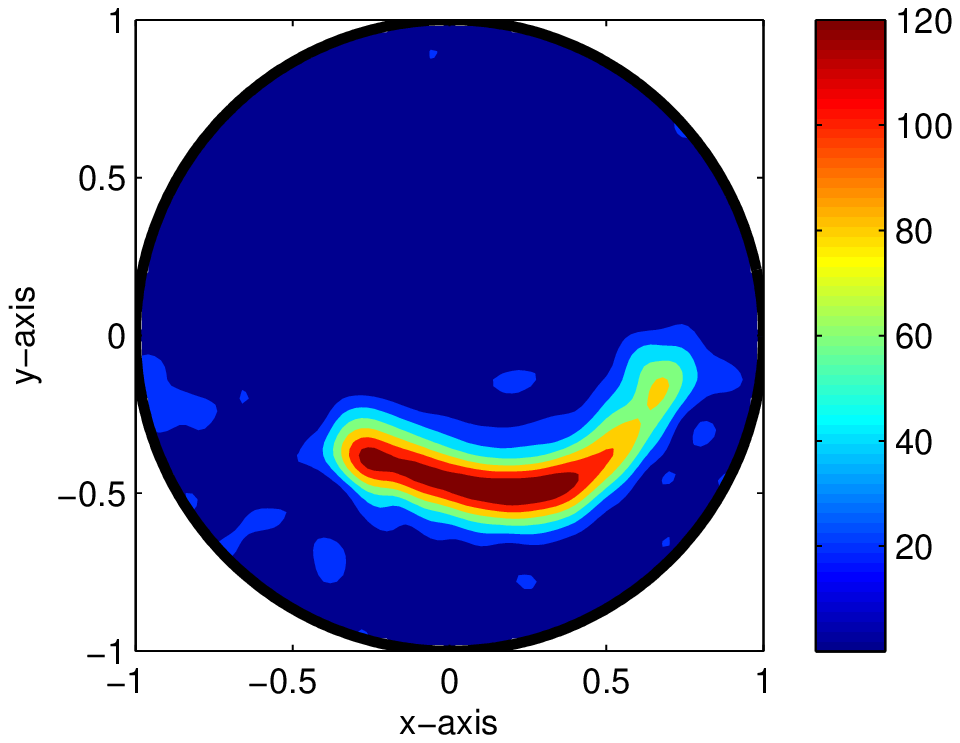}
\includegraphics[width=0.325\textwidth]{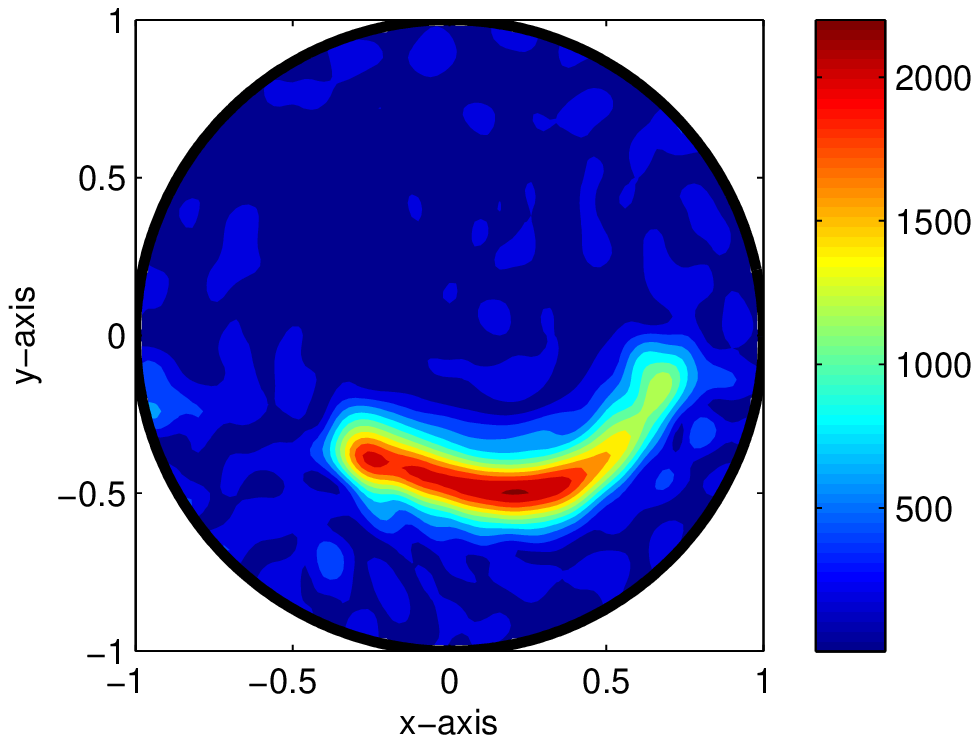}
\caption{\label{FigureT2}Same as Figure \ref{FigureT1} except that the thin inclusion is $\Upsilon_2$.}
\end{center}
\end{figure}

For the final example, we consider imaging of multiple inclusions $\Upsilon_1\cup\Upsilon_2$ with $\eps_1=\mu_1=5$ and $\eps_2=\mu_2=10$. Figure \ref{FigureTM} shows the maps of $\mathbb{W}(\z,10;n)$ for $n=0,1,$ and $2$. Similar to the previous examples, we can see that the map of $\mathbb{W}(\z,10;1)$ yields the most accurate shape of $\Upsilon_1\cup\Upsilon_2$, but because $\Upsilon_1$ has a much smaller values of permittivity and permeability than $\Upsilon_2$, $\mathbb{W}(\z,10;n)$ plots $\Upsilon_1$ as a much smaller magnitude than $\Upsilon_2$.

\begin{figure}[!ht]
\begin{center}
\includegraphics[width=0.325\textwidth]{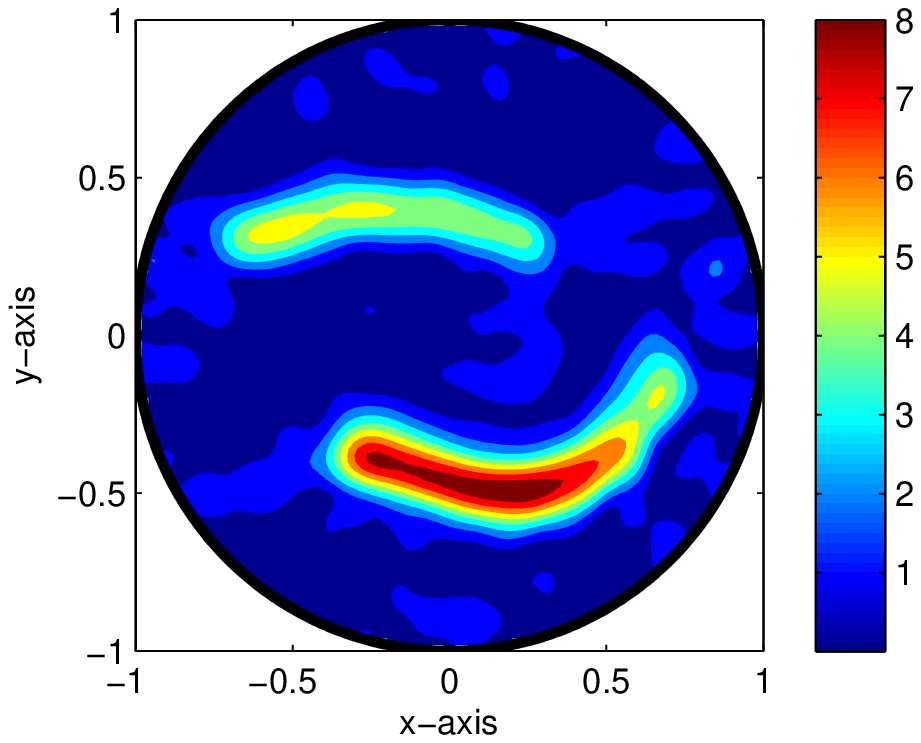}
\includegraphics[width=0.325\textwidth]{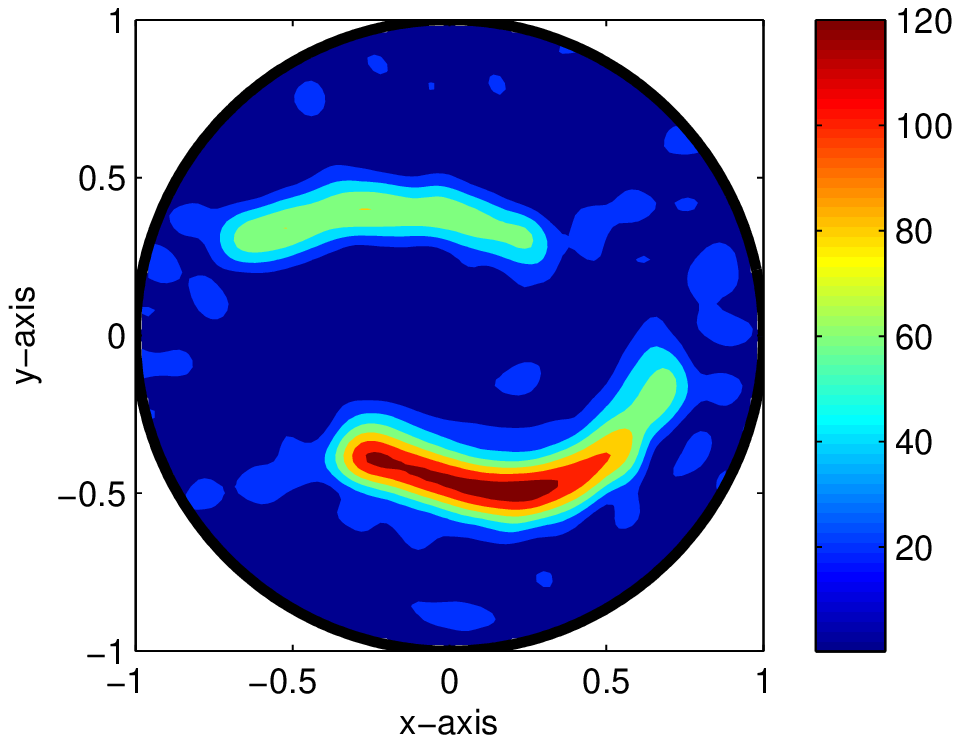}
\includegraphics[width=0.325\textwidth]{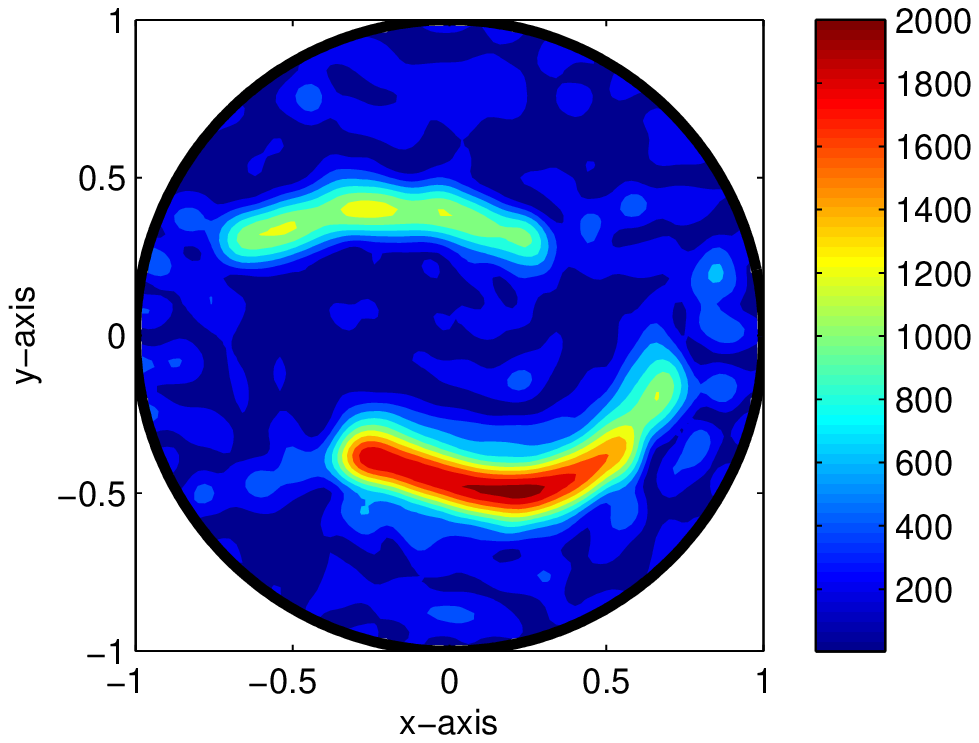}
\caption{\label{FigureTM}Same as Figure \ref{FigureT1} except that the thin inclusions are $\Upsilon_1\cup\Upsilon_2$.}
\end{center}
\end{figure}

\section{Conclusion and perspectives}\label{sec:5}
We considered a multi-frequency subspace migration algorithm weighted by applied frequencies for imaging thin electromagnetic inclusions and discussed its properties. The structure of subspace migration imaging functional and the corresponding numerical examples show that subspace migration imaging functional weighted by the power of applied frequencies is effective and robust against a large amount of random noise; however, application of a large number $n$ in (\ref{WeightedMultiFrequencyImagingFunction}) does not guarantee a good imaging result.

Although we considered a full-view inverse scattering problem, based on our contributions \cite{P2,PL2}, subspace migration imaging functional is also applicable to limited-view inverse problem. This fact has been verified mathematically in \cite{KP} for imaging of small targets but imaging of extended thin electromagnetic inclusions or perfectly conducting cracks has not been considered yet. Therefore, discovering certain properties of subspace migration imaging functional in limited-view problem will be an interesting research topic.

\section*{Acknowledgement}
The author would like to acknowledge Habib Ammari for many valuable advices as well as two-anonymous reviewers. This work was supported by the Basic Science Research Program through the National Research Foundation of Korea (NRF) funded by the Ministry of Education, Science and Technology (No. 2011-0007705), and the research program of Kookmin University in Korea.

\end{document}